\documentclass[12pt]{article}
\usepackage{bbm}
\usepackage{latexsym}
\usepackage{verbatim}
\usepackage{mathrsfs}
\usepackage{amsmath}
\usepackage{graphicx}
\usepackage{booktabs} %比较表格
\usepackage{amssymb}
\usepackage{color}
\usepackage{amsthm}
\usepackage{cite}
\usepackage{indentfirst}
\usepackage{anysize}\marginsize{45mm}{45mm}{40mm}{50mm}
\usepackage{multirow}  %表格多行支持
\usepackage[ruled,vlined]{algorithm2e} %算法环境支持
\usepackage{caption}
\usepackage{enumerate}

\newtheoremstyle{lemma}{\topsep}{\topsep}%
     {}%         Body font
     {}%         Indent amount (empty = no indent, \parindent = para indent)
     {\bfseries}% Thm head font
     {.}%        Punctuation after thm head 定理标号后的点
     {0.4em}%     Space after thm head (\newline = linebreak)定理标号点之后的距离
     {\thmname{#1}\thmnumber{ #2}\thmnote{ #3}}%         Thm head spec
\theoremstyle{lemma}  

\theoremstyle{remark}

\newtheorem{theorem}{Theorem}              %ȫ ֶ   
\newtheorem{lemma}[theorem]{Lemma}
\newtheorem{corollary}[theorem]{Corollary}

\numberwithin{equation}{section}

\usepackage{latexsym, bm}
\begin{document}
\title{Fault-tolerant Hamiltonian connectivity of Johnson graphs \thanks{This research was partially supported by the National Natural Science Foundation of China (No. 11801061)}}

%% author 直到邮箱结束
\author{Huazhong L\"{u}\thanks{Corresponding author.} and Jinhao Liu \\
{\small School of Mathematical Sciences,} \\
{\small University of Electronic Science and Technology of China,} \\
{\small Chengdu, Sichuan 611731, P.R. China}\\
{\small E-mails: lvhz@uestc.edu.cn; 202321110137@std.uestc.edu.cn}\\}
\date{}

\maketitle

\begin{abstract}
Johnson graphs $J(n,k)$ are a classical family of highly symmetric networks known to be Hamiltonian-connected in the fault-free setting. In this paper, we investigate their Hamiltonian connectivity under three failure models, namely general edge faults, matching faults, and vertex faults. For general edge faults, we prove that $J(n,k)$ remains Hamiltonian-connected after the deletion of any set of at most $k(n-k)-3$ edges for $n\geq4$. Since $J(n,k)$ is $k(n-k)$-regular, this attains the natural degree-based upper bound for Hamiltonian connectivity. We then consider matching faults, which exclude the concentration of multiple faulty links at a single vertex and permit substantially larger fault sets. We show that $J(n,k)$ remains Hamiltonian-connected after the deletion of an arbitrary matching for $n\geq5$, including a perfect matching whenever one exists. For vertex failures, we prove that $J(n,k)$ is $(n-2)$-vertex-fault-tolerant Hamiltonian-connected for $n\geq5$. All three results are constructive and lead to recursive fault-tolerant Hamiltonian routing algorithms. Simulation results on Johnson graphs with up to $12{,}870$ vertices further show that the routing algorithms successfully construct
fault-free Hamiltonian paths for all tested source-destination pairs, with measured execution times exhibiting near-linear growth with network size. These results establish a unified fault-tolerant Hamiltonian-connectivity framework for Johnson graphs under different failure patterns.

\vskip 0.1 in

\noindent \textbf{Key words:} Interconnection networks, Johnson graph, Hamiltonian connectivity, Matching, Fault tolerance

\end{abstract}

\section{Introduction}
Johnson graphs form an important family of highly symmetric graphs with rich combinatorial and algebraic structures. For integers $n$ and $k$ with $1\leq k<n$, the Johnson graph $J(n,k)$ has all $k$-subsets of $[n]=\{1,2,\ldots,n\}$ as its vertices, where two vertices are adjacent whenever their intersection has size $k-1$. Thus, $J(n,k)$ has $\binom{n}{k}$ vertices and is $k(n-k)$-regular. Owing to their vertex-transitivity and distance-regularity \cite{CG}, Johnson graphs have received considerable attention in algebraic
graph theory \cite{Brouwer,Koshelev,Koshelev2}, combinatorial and Hamiltonian properties \cite{Alspach2,Kozhevnikov}, and studies of highly symmetric network topologies \cite{LL,Park0}.

A wide range of algorithms for high-performance computing rely on processor organizations with linear arrays and rings. These structures therefore provide natural models for communication, control, and data organization in parallel and distributed systems \cite{Besta}. In graph-theoretic terms, a linear array spanning all processors corresponds to a {\em Hamiltonian path}, which visits every vertex exactly once. A network is said to be {\em Hamiltonian-connected} if such a spanning path exists between every pair of distinct vertices. The use of Hamiltonian paths in multicast routing can help reduce the likelihood of deadlocks and alleviate network congestion \cite{Ebrahimi}. In addition, Ye and Liang \cite{Ye} proposed a five-round adaptive fault-diagnosis algorithm for distributed-memory multiprocessors that relies on a Hamiltonian interconnection topology. Consequently, Hamiltonian properties have been extensively studied for a variety of interconnection networks. Among them, Hamiltonian connectivity represents a particularly strong spanning-path property and has been established for hypercube variants \cite{Dong,Guo,Hsieh2007,Huang,C.Park}, $(n,k)$-star graphs \cite{HsuHC2003}, arrangement graphs \cite{HsuHC2004}, and cycle composition networks \cite{KLLTH08}. For Johnson graphs, Alspach \cite{Alspach} proved that $J(n,k)$ is Hamiltonian-connected, establishing this strong spanning-path property in the fault-free setting.

In practical networks, however, processors and communication links may fail, making it important to determine whether Hamiltonian connectivity can be maintained in the presence of faults. Under general edge faults, the locations of faulty links are unrestricted, and fault-tolerant Hamiltonian properties under this model have been extensively investigated for various interconnection networks \cite{HaoRX2014,Lan,LiP2017,TsaiCH2002,ZhouQ2015}. Consequently, determining the maximum number of general edge faults that can be tolerated while preserving Hamiltonian connectivity is a fundamental problem in the reliability analysis of interconnection networks.

Beyond the unrestricted fault model, a variety of constrained edge-fault models that exclude certain local obstructions or restrict the distribution of faulty links have also been investigated, including conditional fault models \cite{Ho2009,Hsieh2007,Hsieh2016}, in which a prescribed minimum residual degree is required, as well as more recent partitioned-edge \cite{Zhuang} and region-based fault models \cite{Wang}. These studies show that incorporating information about the distribution of faults can yield stronger fault-tolerance guarantees for Hamiltonian connectivity.

A natural way to accommodate a substantially larger number of link failures is to restrict the faulty edges to be pairwise disjoint, i.e., a matching. Such a restriction prevents multiple faulty links from concentrating at a single vertex. Hamiltonian properties under matching faults have therefore received considerable attention in hypercube-based networks. Dimitrov et al. \cite{Dimitrov} characterized Hamiltonian paths and cycles in hypercubes avoiding a prescribed matching in terms of certain forbidden half-layer and almost-layer structures. Dybizba\'{n}ski and Szepietowski \cite{Dybizbanski} further studied hypercubes with pairwise disjoint faulty edges and showed that, for $n\geq4$, $Q_n$ is Hamiltonian if and only if every dimension contains two fault-free crossing edges of different parity. For the balanced hypercube $BH_n$, Lan and L\"u \cite{Lan2} proved that every fault-free edge lies on a fault-free Hamiltonian cycle even when the faulty-edge set is a perfect matching. Similarly, Zuo and L\"u \cite{Zuo} showed that, for any matching $F$ of the folded hypercube $FQ_n$, $FQ_n-F$ remains Hamiltonian for $n\geq2$ and is Hamiltonian laceable for every odd $n\geq3$. In particular, in both $BH_n$ and $FQ_n$, the number of admissible faulty edges can reach $2^{2n-1}$ and $2^{n-1}$, respectively, both exponential in the network dimension $n$. 

Whereas an edge fault disables a single communication link, a vertex fault removes a processor together with all of its incident links and can therefore affect the network more extensively. Accordingly, vertex failures have been studied both independently and together with edge failures in mixed fault models. In particular, fault-tolerant Hamiltonian properties under combined vertex and edge failures have been established for several interconnection networks, including $(n,k)$-bubble sort graphs \cite{LiuTian2026}, 2-tree-generated networks \cite{AbdallahCheng2022}, and arrangement graphs \cite{HsuHC2004}. These results further motivate the study of Hamiltonian connectivity under vertex failures in Johnson graphs.

For bipartite networks, Hamiltonian laceability serves as the natural counterpart of Hamiltonian connectivity because the bipartition imposes parity restrictions on the endpoints of a Hamiltonian path. To provide a comparison with representative results for several well-known interconnection networks, Table \ref{tab:comparison} summarizes fault-tolerant Hamiltonian properties under different failure models.

\begin{table}[htbp]
    \centering
    \caption{Comparison of fault-tolerant Hamiltonian properties of representative interconnection networks.}
    \label{tab:comparison}
    \renewcommand{\arraystretch}{1.2}
    \resizebox{\columnwidth}{!}{%
    \begin{tabular}{l c l l l}
        \toprule
        \textbf{Network} & \textbf{Bipartite} & \textbf{Hamiltonian Property} & \textbf{Fault Model} & \textbf{Fault-Tolerance Result}  \\
        \midrule
        \multirow{2}{*}{Hypercube $Q_n$} & \multirow{2}{*}{Yes} & Hamiltonian Laceability & General Edge & $n-2$ \cite{TsaiCH2002} \\
         & & Hamiltonicity & Matching & Dimension-parity condition \cite{Dybizbanski} \\
        \midrule
        \multirow{2}{*}{Balanced Hypercube $BH_n$} & \multirow{2}{*}{Yes} &   Hamiltonian Laceability & General Edge & $2n-2$ \cite{ZhouQ2015} \\
         & & Hamiltonicity & Matching & general matching \cite{Lan2} \\
        \midrule
        \multirow{2}{*}{Folded Hypercube $FQ_n$ (odd $n$)} & \multirow{2}{*}{Yes} & Strongly Hamiltonian Laceability & General Edge & $n-1$ \cite{Hsieh2008} \\
         & & Hamiltonian Laceability & Matching & Arbitrary matching \cite{Zuo} \\
        \midrule
        Star Graph $S_n$ & Yes & Hamiltonian Laceability & General Edge & $n-3$ \cite{LiT2002} \\
        \midrule
        $k$-ary $n$-cube ($k\geq4$ even) & Yes & Hamiltonian Laceability & General Edge & $2n-2$ \cite{Stewart} \\
        \midrule
        $(n,k)$-Star Graph $S_{n,k}$ ($n-k\geq2$) & No & Hamiltonian Connectivity & Mixed Vertex/Edge & $|F_v|+|F_e|\leq n-4$ \cite{HsuHC2003} \\
        \midrule
        Arrangement Graph $A_{n,k}$ & No & Hamiltonian Connectivity & Mixed Vertex/Edge & $|F_v|+|F_e|\leq k(n-k)-3$ \cite{HsuHC2004} \\
        \midrule
        $(n,k)$-Bubble Sort Graph & No & Hamiltonian Connectivity & Mixed Vertex/Edge & $|F_v|+|F_e|\leq n-4$ \cite{LiuTian2026}  \\
        \midrule
        2-tree-generated network & No & Hamiltonian Connectivity & Mixed Vertex/Edge & $|F_v|+|F_e|\leq 2n-7$ \cite{AbdallahCheng2022} \\
        \midrule
        \multirow{3}{*}{\textbf{Johnson} $J(n,k)$} & \multirow{3}{*}{No} & \multirow{3}{*}{\textbf{Hamiltonian Connectivity}} & \textbf{General Edge} & $k(n-k)-3$ \\
         & & & \textbf{Matching} & Arbitrary matching \\
         & & & \textbf{Vertex} & $n-2$  \\
        \bottomrule
    \end{tabular}%
}
\parbox{\columnwidth}{\footnotesize
\textit{Note:} $F_v$ and $F_e$ denote the sets of faulty vertices and faulty edges, respectively.}
\end{table}

Despite the extensive literature on fault-tolerant Hamiltonian properties of well-known interconnection networks, corresponding results for Johnson graphs remain relatively limited. In particular, the following questions arise naturally. First, how many arbitrarily distributed faulty edges can $J(n,k)$ tolerate while remaining Hamiltonian-connected? Second, if the faulty edges are required to form a matching, can Hamiltonian connectivity be preserved for fault sets far exceeding this general-fault bound, possibly even for a perfect matching? Finally, how robust is the Hamiltonian connectivity of Johnson graphs against vertex failures? In this paper, we answer these questions constructively. Our main contributions are summarized as follows.

\begin{itemize}
\item \textbf{General edge faults.}
We prove that $J(n,k)-F$ remains Hamiltonian-connected for every edge-fault set $F$ satisfying $|F|\leq k(n-k)-3$.
Since $J(n,k)$ is $k(n-k)$-regular, this bound reaches the natural upper bound imposed by the minimum-degree obstruction.

\item \textbf{Matching faults.}
We establish that $J(n,k)$ remains Hamiltonian-connected after the deletion of an arbitrary matching. In particular, the result still holds when the faulty edges form a perfect matching whenever one exists. Thus, under this model, the number of faulty links can reach $\frac{1}{2}\binom{n}{k}$, i.e., half the order of $J(n,k)$.

\item \textbf{Vertex faults.}
We show that $J(n,k)$ is $(n-2)$-vertex-fault-tolerant Hamiltonian-connected; that is, after the removal of any set of at most $n-2$ faulty vertices, each pair of remaining fault-free vertices can still be joined by a Hamiltonian path covering all fault-free vertices.

\item \textbf{Constructive routing algorithms.}
The constructive proofs lead directly to recursive fault-tolerant Hamiltonian routing algorithms for the three fault models. We further analyze their complexity and evaluate their practical execution time through simulations on Johnson graphs of different scales.
\end{itemize}

These results establish a systematic fault-tolerant Hamiltonian-connectivity framework for Johnson graphs and demonstrate their strong resilience in supporting Hamiltonian routing under diverse failure scenarios.

The remainder of this paper is organized as follows. Section II introduces the necessary notations and basic properties of Johnson graphs, and then investigates Hamiltonian connectivity under general edge faults. Section III establishes the fault-tolerant Hamiltonian connectivity under arbitrary matching faults. Section IV studies vertex faults and derives the corresponding vertex-fault-tolerant Hamiltonian-connectivity result. Section V develops the Hamiltonian routing algorithms associated with the three fault models, together with their complexity analysis and performance evaluation. Finally, Section VI concludes the paper.

\section{Optimal Edge-fault-tolerant Hamiltonian Connectivity of $J(n,k)$}

We first introduce some notation and preliminary results that will be used throughout this paper. For a fixed element $i \in [n]$, let $X(i)$ denote the subgraph of $J(n,k)$ induced by the vertices containing $i$, and let $Y(i)$ denote the subgraph induced by the vertices not containing $i$. When the choice of $i$ is clear from the context, we simply write $X$ and $Y$. Clearly,
$X \cong J(n-1, k-1)$ and $Y \cong J(n-1, k)$.
For an edge-fault set $F \subseteq E(J(n,k))$, let $F_X$ and $F_Y$ denote the sets of faulty edges contained in $X$ and $Y$, respectively. We also recall that $J(n,k) \cong J(n,n-k)$ for $0 \le k \le n$, and $J(n,1) \cong K_n$.

\begin{lemma}\cite{Alspach}\label{HC_Johnson} 
For $0\le k\le n$ and $n\ge1$, $J(n,k)$ is Hamiltonian-connected.
\end{lemma}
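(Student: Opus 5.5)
We prove the statement by induction on $n$, using the decomposition $J(n,k)=X(i)\cup Y(i)$ with $X\cong J(n-1,k-1)$ and $Y\cong J(n-1,k)$.

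\textbf{Base cases and symmetry.} When $k=0$ or $k=n$, the graph is a single vertex, so the claim is vacuous. When $k=1$ or $k=n-1$, we have $J(n,k)\cong K_n$, and complete graphs are clearly Hamiltonian-connected. Since $J(n,k)\cong J(n,n-k)$, we may therefore assume $2\le k\le n-2$, so that $n\ge 4$ and both $X$ and $Y$ are Johnson graphs with at least two vertices.

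\textbf{Structure between $X$ and $Y$.} A vertex $A\in X$ is adjacent to $B\in Y$ exactly when $B=(A\setminus\{i\})\cup\{j\}$ for some $j\notin A$. Hence each vertex of $X$ has $n-k\ge2$ neighbours in $Y$, and each vertex of $Y$ has $k\ge2$ neighbours in $X$. Let $u,v$ be the two prescribed endpoints. By vertex-transitivity, or simply by choosing $i$ appropriately, we split into two cases.

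\emph{Case 1: $u\in X$ and $v\in Y$.} This case is available whenever $u\neq v$, since $u\setminus v\neq\emptyset$ and we may take $i\in u\setminus v$. Choose a vertex $x\in X$ with $x\neq u$ that has a neighbour $y\in Y$ with $y\neq v$. This is possible because $|X|\ge2$ and each $x$ has at least two neighbours in $Y$. By the induction hypothesis, $X$ has a Hamiltonian $u$--$x$ path and $Y$ has a Hamiltonian $y$--$v$ path. Joining these with the edge $xy$ gives the required path.

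\emph{Case 2: both endpoints on the same side.} This case is not needed, since Case 1 always applies after choosing $i\in u\setminus v$.

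\textbf{Main obstacle.} The only delicate point is guaranteeing $x\neq u$ and $y\neq v$ when a side is tiny. If $|X|=1$, then $k-1=0$ or $k-1=n-1$, which is excluded by $2\le k\le n-2$, and the same holds for $Y$. With $|X|\ge2$, pick any $x\neq u$. Then $x$ has $n-k\ge2$ neighbours in $Y$, so at least one of them is not $v$. The induction runs on $n$, with $X$ and $Y$ covered by the base cases or the inductive hypothesis, which gives Lemma~\ref{HC_Johnson}.
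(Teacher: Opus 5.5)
Your proof is correct: taking $i\in u\setminus v$ puts $u\in X(i)$ and $v\in Y(i)$, and any $x\in X\setminus\{u\}$ has $n-k\ge 2$ neighbours in $Y$, so a bridge $xy$ with $y\ne v$ exists and the two inductive Hamiltonian paths join into a Hamiltonian $u$--$v$ path. The paper does not prove this lemma itself (it cites Alspach), but your argument is the same as the one the paper gives for Theorem~\ref{FT_M_Johnson} once you set $F=\emptyset$.
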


\begin{lemma} \cite{LL}\label{P2C_Johnson} 
$J(n,k)$, for $n\ge 4$ and $1\le k \le n-1$, is paired $2$-coverable.
\end{lemma}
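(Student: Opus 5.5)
We argue by induction on $n$, using the decomposition of $J(n,k)$ into $X=X(i)\cong J(n-1,k-1)$ and $Y=Y(i)\cong J(n-1,k)$. The hypothesis is that $J(n,k)$ is paired $2$-coverable: for any four distinct vertices $s_1,t_1,s_2,t_2$ there are vertex-disjoint paths $P_1$ ($s_1$ to $t_1$) and $P_2$ ($s_2$ to $t_2$) that together cover every vertex. Lemma~\ref{HC_Johnson} and the inductive hypothesis are applied inside $X$ and $Y$.

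\textbf{Reductions and base cases.} Since $J(n,k)\cong J(n,n-k)$, we may assume $k\le n/2$. If $k=1$, then $J(n,1)\cong K_n$ with $n\ge4$, and the claim is trivial. Take $P_1=s_1t_1$, and let $P_2$ start at $s_2$, visit the remaining vertices in any order, and end at $t_2$. The base case $n=4$ then reduces to $J(4,2)$, the octahedron $K_{2,2,2}$. We check it directly up to symmetry: the four terminals either contain one or two antipodal pairs or none, and in each case an explicit pair of paths is easily exhibited.

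So assume $n\ge5$ and $2\le k\le n/2$. Then $Y\cong J(n-1,k)$ with $n-1\ge4$ and $1\le k\le n-2$, so the induction applies to $Y$. The same holds for $X$, and when $k-1=1$ the graph $X$ is complete. We use the cross edges between $X$ and $Y$:
\begin{itemize}
\item a vertex $A\in X$ is adjacent to $A-\{i\}+j$ for each of the $n-k$ choices $j\notin A$;
\item a vertex $B\in Y$ has $k$ neighbours in $X$.
\end{itemize}
Hence the cross edges form a biregular bipartite graph with many disjoint edges.

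\textbf{Choice of $i$.} Because the four terminals are distinct $k$-sets, we can choose the coordinate $i$ to control how they split between $X$ and $Y$. For instance, pick $i\in s_1\setminus t_1$, which separates $s_1$ from $t_1$. We then argue by cases on the distribution of terminals.
\begin{enumerate}
\item $\{s_1,t_1\}\subseteq X$ and $\{s_2,t_2\}\subseteq Y$. Take a Hamiltonian path of $X$ and one of $Y$ by Lemma~\ref{HC_Johnson}.
\item The pairs are split, say $s_1,s_2\in X$ and $t_1,t_2\in Y$. Choose two disjoint cross edges $x_1y_1,x_2y_2$ avoiding the terminals. Apply paired $2$-coverability to $X$ with pairs $(s_1,x_1),(s_2,x_2)$ and to $Y$ with pairs $(y_1,t_1),(y_2,t_2)$, then join the pieces along the cross edges.
\item Three terminals lie on one side. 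Cover that side by the inductive hypothesis, using one auxiliary exit vertex $x$ in place of the lone terminal $t$ on the other side. Then finish with a Hamiltonian path of the other side from a neighbour $y$ of $x$ to $t$.
\item All four terminals lie on one side, say $X$. Apply induction in $X$ to get $P_1,P_2$. Pick an edge $uv$ on $P_1$ with $u,v$ having distinct neighbours $u',v'\in Y$. Replace $uv$ by $u\,u'\cdots v'\,v$, where the middle segment is a Hamiltonian path of $Y$ from Lemma~\ref{HC_Johnson}.
\end{enumerate}

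\textbf{Main obstacle.} The hard part is the bookkeeping in cases 2 and 3. The auxiliary exit vertices $x_j\in X$ and $y_j\in Y$ must be distinct from the terminals and from each other, and must be joined by disjoint cross edges. When $k=2$, the side $X\cong K_{n-1}$ is small relative to the four terminals, so availability is tightest there; however, $X$ is then complete, so any needed routing inside $X$ is free. In general, the counts $n-k\ge2$ and $k\ge2$ of cross neighbours give enough freedom to avoid at most three forbidden vertices. The awkward subcases are those where a terminal's cross neighbours are all themselves terminals; these must be handled separately, either by re-choosing $i$ or by a direct construction. This is where I expect the proof to require the most care.
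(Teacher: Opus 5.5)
The paper gives no proof of this lemma; it imports it from \cite{LL}, so there is no in-paper argument to compare against. Your induction along the $X(i)/Y(i)$ split is a valid self-contained route. It is the same decomposition the paper uses for Theorems~\ref{FT_HC_Johnson}, \ref{FT_M_Johnson} and~\ref{FT_V_Johnson}.

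The bookkeeping you flag as the main obstacle is routine. Your bridges always pass through auxiliary non-terminal vertices, so no terminal's own cross-neighbourhood is ever used. Hence the ``awkward subcases'' never arise, and no re-choice of $i$ is needed.

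The counts are as follows. From $k\le n/2$ and $n\ge5$ you get $n-k\ge3$; this, not $n-k\ge2$, is what case~2 uses. You also have $k\ge2$, $|X|\ge n-1\ge4$ and $|Y|\ge\binom{n-1}{2}\ge6$, so paired $2$-coverability can be applied on either side with four distinct terminals.

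\begin{itemize}
\item Case~3: a free exit vertex exists on the crowded side. It has $n-k\ge3$ cross neighbours if it lies in $X$, or $k\ge2$ if it lies in $Y$, and at most one of these is the lone terminal.
\item Case~2: take any $x_1,x_2\in X\setminus\{s_1,s_2\}$. Each has at least $n-k-2\ge1$ neighbours in $Y\setminus\{t_1,t_2\}$. By Lemma~\ref{Common_neighbor}, together they have at least $2(n-k)-3\ge3$ neighbours there, so Hall's condition yields the two disjoint cross edges.
\item Case~4: every edge of $P_1$ works. For an edge in $X$ this follows from Lemma~\ref{Common_neighbor}. For an edge $uv$ in $Y$, the only common $X$-neighbour is $(u\cap v)\cup\{i\}$, while each endpoint has $k\ge2$ $X$-neighbours.
\end{itemize}

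Two minor points. If you fix $i\in s_1\setminus t_1$, cases~1 and~4 are vacuous and can be dropped; if you keep all four cases, any $i$ works. In the octahedron, four terminals always contain an antipodal pair, so the ``none'' configuration does not occur.

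With these checks written out, your argument proves the lemma outright. It makes the paper self-contained where it currently relies on the citation, and it is constructive in the same recursive style as the paper's routing algorithms.
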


\begin{lemma}\cite{Ho2009}\label{CHC_K_n}
Let $F\subseteq E(K_n)$ satisfy $\delta(K_n-F)\ge3$. Then $K_n-F$ is Hamiltonian-connected whenever $|F|\le 2n-10$ for $n\notin\{4,5,8,10\}$; the corresponding maximum bounds for $n=4,5,8,10$ are $0,2,5,$ and $9$, respectively.
\end{lemma}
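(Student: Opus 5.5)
The lemma is quoted from \cite{Ho2009}, but here is how I would prove it directly. Put $G=K_n-F$ and fix distinct endpoints $u,v$.

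\textbf{Ruling out the easy route.} Ore's Hamiltonian-connectivity criterion would need $d(x)+d(y)\ge n+1$ for every nonadjacent pair $x,y$. Two nonadjacent vertices can carry $|F|+1\le 2n-9$ faulty incidences between them, which only gives $d(x)+d(y)\ge 7$. So a purely degree-sum argument fails, and I would instead argue by induction on $n$, using the hypothesis $\delta(G)\ge 3$ for local repairs.

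\textbf{Base cases.} The exceptional orders $n\in\{4,5,8,10\}$ and all small $n$ (say $n\le 11$) would be checked by exhaustive case analysis or computer verification. This is also where one confirms that the stated maxima $0,2,5,9$ are attained.

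\textbf{Inductive step.}
\begin{enumerate}
\item Pick a vertex $x\notin\{u,v\}$ with $d_F(x)\ge 2$. If every vertex other than $u,v$ has $d_F(x)\le 1$, then almost all faults sit at $u$ and $v$. I would handle that case directly: first build a Hamiltonian path of the nearly complete graph $G-\{u,v\}$, which is easy since its minimum degree is about $n-4$. Then attach $u$ and $v$ at its ends, choosing the ends among their fault-free neighbours.
\item Otherwise, delete $x$. Set $F'=F-x$, so $|F'|\le 2n-12=2(n-1)-10$.
\item I would choose $x$ of maximum fault degree and check that $\delta(K_{n-1}-F')\ge 3$. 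This fails only when some vertex $y$ has degree exactly $3$ in $G$ and is adjacent to $x$. Such a $y$ carries $n-4$ faults, and this heavily constrains $F$. In that situation I would instead delete $y$ itself, or contract the structure around it, with a separate counting argument.
\item By induction, $K_{n-1}-F'$ has a Hamiltonian $u$–$v$ path $P$.
\item Reinsert $x$ by finding an edge $ab$ of $P$ with $a,b\in N_G(x)$, and replacing $ab$ with $a\,x\,b$.
\end{enumerate}

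\textbf{Main obstacle: reinserting $x$.} If $d_G(x)$ is large, two of its neighbours are consecutive on $P$ by pigeonhole, since $P$ has only $n-2$ edges. The difficulty is when $d_G(x)$ is small. Then I would use freedom in choosing $P$. One option is to strengthen the induction hypothesis so that the Hamiltonian path can be required to contain a prescribed fault-free edge $ab$ with $a,b\in N_G(x)$. Such an edge exists because $|N_G(x)|\ge 3$ and few faults lie inside $N_G(x)$. The other option is a Pósa-type rotation on $P$ to make two neighbours of $x$ consecutive.

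The bound $2n-10$ is exactly what keeps the exceptional configurations from blocking both options. Verifying this through a counting of faults incident to $N_G(x)$ is the step I expect to consume most of the work.
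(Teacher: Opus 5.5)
The paper does not prove this lemma; it is quoted from Ho et al.~\cite{Ho2009}. Your argument therefore has to stand on its own, and as written it has a genuine gap at exactly the step you flag: reinserting $x$.

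You delete a non-terminal $x$ of \emph{maximum} fault degree, so $d_G(x)$ can be as small as $3$. The pigeonhole reinsertion, however, needs $d_G(x)>\lceil (n-1)/2\rceil$, so your selection rule produces precisely the hard case. Neither fallback closes it.

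The first fallback, a Hamiltonian $u$--$v$ path through a prescribed fault-free edge, is false under the lemma's own hypotheses. In $K_m$ with $m\ge 7$, let $N_G(w)=\{a,b,c\}$ and $N_G(z)=\{a,b,c,d\}$. Then $wz\in F$, $|F|=(m-4)+(m-5)-1=2m-10$, and $\delta(G)=3$. Take terminals $a,b$ and prescribed edge $cd$.
\begin{itemize}
\item The interior vertex $w$ must use $\{a,c\}$ or $\{b,c\}$, since $\{a,b\}$ would close the path at three vertices.
\item Together with $cd$, this saturates $c$ and one terminal.
\item Then $z$ is forced between $d$ and the other terminal.
\end{itemize}
The only candidates are $a,w,c,d,z,b$ and $a,z,d,c,w,b$, each with six vertices. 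So any strengthened hypothesis must restrict the admissible edges, for example to ``some edge inside a given set''. It would have to be both true and preserved by the induction, and you have not formulated such a statement.

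The second fallback, P\'{o}sa rotation, does not substitute either. A rotation moves an endpoint, whereas here both endpoints are fixed.

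Your first reduction (all non-terminals of fault degree $\le 1$) is fine. Starting the induction at $n=12$, so that $n-1\notin\{8,10\}$, is also correct.

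The degree-repair step, however, is only named (``delete $y$, or contract the structure around it''). You can at least sharpen it. If $y\notin\{u,v\}$ were a degree-$3$ neighbor of $x$, then $d_F(x)\ge d_F(y)=n-4$ and $xy\notin F$ would give $|F|\ge 2n-8$. So the offending $y$ is always a terminal, but deleting a terminal changes an endpoint and needs its own argument.

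A more workable organisation is as follows. Delete a non-terminal $x$ with $2\le d_F(x)\le\lfloor (n-3)/2\rfloor$ whenever one exists. Then $d_G(x)>\lceil (n-1)/2\rceil$ and reinsertion is automatic, though you must still keep $\delta\ge 3$. Otherwise every non-terminal has fault degree $\le 1$ or $\ge (n-2)/2$. Since $\sum_y d_F(y)\le 4n-20$, this leaves fewer than eight heavy vertices.

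That residual configuration is where the real case analysis lies, and it is also where the extremal examples live:
\begin{itemize}
\item two nonadjacent degree-$3$ vertices with a common neighborhood, using $2n-9$ faults;
\item for $n\in\{8,10\}$, the set $F=E(K_{n/2})$, whose independent set of size $n/2$ cannot fit between two outside terminals.
\end{itemize}
Until that analysis is supplied, the proposal is an outline rather than a proof.
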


As a direct consequence of Lemma \ref{CHC_K_n}, we obtain the following edge-fault Hamiltonian-connectivity result for complete graphs.
    
\begin{corollary}\label{FT_HC_Kn}
For $n \ge 4$, $K_n$ is $(n-4)$-edge-fault-tolerant Hamiltonian-connected.
\end{corollary}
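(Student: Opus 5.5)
We derive Corollary \ref{FT_HC_Kn} from Lemma \ref{CHC_K_n} by checking two things for every edge set $F\subseteq E(K_n)$ with $|F|\le n-4$: that the residual minimum degree is at least $3$, and that $n-4$ does not exceed the maximum admissible number of faults given in Lemma \ref{CHC_K_n} for the relevant $n$. Since $K_n$ is $(n-1)$-regular, each vertex loses at most $|F|\le n-4$ of its incident edges. Hence $\delta(K_n-F)\ge (n-1)-(n-4)=3$, so the degree hypothesis of Lemma \ref{CHC_K_n} holds automatically, regardless of how $F$ is distributed.

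Next we compare $n-4$ with the fault bound of Lemma \ref{CHC_K_n}. For $n\notin\{4,5,8,10\}$ we need $n-4\le 2n-10$, that is, $n\ge 6$. This covers every $n\ge 6$ outside the exceptional set, but it fails for $n=7$? No: for $n=7$ we have $n-4=3\le 4=2n-10$, so it holds. The genuinely non-generic small values are $n=4,5$, which are exceptional anyway. For the exceptional values the listed bounds are $0,2,5,9$. Comparing with $n-4=0,1,4,6$ gives $0\le0$, $1\le2$, $4\le5$ and $6\le9$, so every case is covered.

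Combining the two checks, for each $n\ge4$ and each $F$ with $|F|\le n-4$, Lemma \ref{CHC_K_n} applies and $K_n-F$ is Hamiltonian-connected. There is no real obstacle here. The only care needed is in the case bookkeeping for $n\in\{4,5,8,10\}$, in particular $n=4$, where $F=\emptyset$ and the statement reduces to the trivial fact that $K_4$ is Hamiltonian-connected.
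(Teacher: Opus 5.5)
Your proposal is correct and follows the paper's proof essentially verbatim. The bound $\delta(K_n-F)\ge (n-1)-(n-4)=3$ supplies the degree hypothesis of Lemma \ref{CHC_K_n}. The comparisons $n-4\le 2n-10$ for $n\ge 6$, $n\notin\{8,10\}$, together with $0\le 0$, $1\le 2$, $4\le 5$ and $6\le 9$ for $n=4,5,8,10$, supply the fault-count hypothesis. Just drop the self-questioning aside about $n=7$, which the generic case already covers.
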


\begin{proof}
Let $F \subseteq E(K_n)$ be any faulty edge set with $|F| \le n-4$. Since every vertex of $K_n$ has degree $n-1$,
\[
\delta(K_n-F)\ge (n-1)-|F|
\ge (n-1)-(n-4)=3.
\]
Thus, the minimum-degree condition in Lemma \ref{CHC_K_n} is satisfied.

It remains to verify that the number of faulty edges is within the corresponding bound given in Lemma \ref{CHC_K_n}. If $n\ge 6$ and $n\notin\{8,10\}$, then $|F|\le n-4\le 2n-10$.
For the exceptional cases $n\in\{4,5,8,10\}$, we have $|F|\le 0,\,1,\,4,\,6$, respectively, which do not exceed the corresponding bounds $0,\,2,\,5,\,9$
given in Lemma \ref{CHC_K_n}. Therefore, $K_n-F$ is Hamiltonian-connected.
\end{proof}

We now turn our attention to the Johnson graphs. For small parameters, although the graphs are relatively small, manually checking all possible faulty-edge sets and all pairs of endpoints is still combinatorially tedious. Therefore, we developed an exhaustive-search program to rigorously verify the base cases. For each admissible faulty-edge set 
$F$, we enumerate all pairs of distinct vertices $u,v\in V(J(n,k)-F)$ and exhaustively verify the existence of a Hamiltonian $u$-$v$ path in $J(n,k)-F$.

\begin{lemma}\label{FT_HC_J42}
$J(4,2)$ is 1-edge-fault-tolerant Hamiltonian-connected.
\end{lemma}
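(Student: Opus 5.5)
Since $J(4,2)$ is small, I would verify the statement by hand, using symmetry to cut down the cases, rather than relying on the exhaustive search. The plan is to identify $J(4,2)$ explicitly and then write down a Hamiltonian path for one representative of each symmetry class of endpoint pairs.

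\textbf{Step 1: structure of $J(4,2)$.} Two $2$-subsets of $[4]$ fail to be adjacent exactly when they are disjoint. So $J(4,2)$ is the octahedron $K_{2,2,2}$, with antipodal (non-adjacent) pairs
\[
a=\{1,2\},\ a'=\{3,4\};\qquad b=\{1,3\},\ b'=\{2,4\};\qquad c=\{1,4\},\ c'=\{2,3\}.
\]
Every vertex is adjacent to every other vertex except its antipode.

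\textbf{Step 2: reduce to one fault and a few pairs.} The case $F=\emptyset$ is Lemma \ref{HC_Johnson}. The octahedron is edge-transitive, so we may assume the single faulty edge is $ab$. The automorphisms fixing $\{a,b\}$ include $a\leftrightarrow b$ (with $a'\leftrightarrow b'$) and $c\leftrightarrow c'$. Under these, every unordered pair of distinct vertices falls into one of the following classes, with representatives
\[
\{a,b\},\ \{a,a'\},\ \{a,b'\},\ \{a,c\},\ \{a',b'\},\ \{a',c\},\ \{c,c'\}.
\]
For example, $\{a,c'\}\sim\{a,c\}$, $\{b,a'\}\sim\{a,b'\}$, and $\{b',c\}\sim\{a',c\}$.

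\textbf{Step 3: explicit paths avoiding $ab$.} For each representative I exhibit a Hamiltonian path in $J(4,2)-ab$:
\begin{align*}
\{a,b\}&: a\,c\,a'\,b'\,c'\,b, & \{a,a'\}&: a\,c\,b\,c'\,b'\,a',\\
\{a,b'\}&: a\,c\,b\,a'\,c'\,b', & \{a,c\}&: a\,b'\,c'\,b\,a'\,c,\\
\{a',b'\}&: a'\,a\,c\,b\,c'\,b', & \{a',c\}&: a'\,b\,c'\,a\,b'\,c,\\
\{c,c'\}&: c\,a\,b'\,a'\,b\,c'. & &
\end{align*}
For each path one checks two things: no consecutive pair is antipodal, and $ab$ is never used. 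Both conditions hold in every case. Applying the symmetries then gives paths for all pairs, which proves the lemma.

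The only real point of care is Step 2: making sure the listed classes exhaust all $15$ pairs under the stabilizer of the faulty edge. If in doubt, one can simply list all $15$ pairs directly.
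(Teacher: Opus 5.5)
There is one concrete error. Your certificate for the class $\{a',b'\}$ is invalid. The path $a'\,a\,c\,b\,c'\,b'$ begins with $a'a$, but $a'=\{3,4\}$ and $a=\{1,2\}$ are disjoint, i.e.\ antipodal, so $a'a$ is not an edge of $J(4,2)$. Your closing claim that no consecutive pair is antipodal ``in every case'' is therefore false. As written, the orbit $\{\{a',b'\}\}$ is not covered.

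The repair is immediate. In $J(4,2)-ab$ the four middle vertices $a,b,c,c'$ induce the $4$-cycle $a\,c\,b\,c'\,a$. The second vertex of the path must lie in $N(a')\setminus\{b'\}=\{b,c,c'\}$, and the penultimate one in $N(b')\setminus\{a'\}=\{a,c,c'\}$. This gives, for example, $a'\,b\,c'\,a\,c\,b'$ (or $a'\,c\,b\,c'\,a\,b'$).

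The rest of the proposal holds up:
\begin{itemize}
\item The other six paths are correct and avoid $ab$.
\item The edge-transitivity reduction is valid.
\item Your orbit decomposition under the order-$4$ stabilizer of the edge $ab$ is right, since $1+2+2+4+1+4+1=15$.
\end{itemize}
With the replacement path you have a complete proof.

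Compared with the paper, which settles this lemma by an exhaustive computer search, your route yields a short human-checkable certificate of seven paths. It is in the same spirit as the paper's own symmetry-reduced hand analysis of $J(4,2)$ under matching faults.
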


Since $J(5,2) \cong J(5,3)$, by computationally verifying $J(5,2)$, we immediately obtain the following lemma.

\begin{lemma}\label{FT_HC_J52}
Both $J(5,2)$ and $J(5,3)$ are $3$-edge-fault-tolerant Hamiltonian-connected.
\end{lemma}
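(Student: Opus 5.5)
The paper states that Lemma \ref{FT_HC_J52} is verified by exhaustive computer search, like Lemma \ref{FT_HC_J42}. I would follow the same route, but reduce the search space as much as possible and also sketch a structural argument that could replace or cross-check the computation.

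\textbf{Reduction to $J(5,2)$.} Complementation $A\mapsto [5]\setminus A$ is an isomorphism $J(5,2)\to J(5,3)$. It maps any faulty edge set $F$ to a faulty set $F'$ of the same size, and Hamiltonian paths to Hamiltonian paths. So it suffices to treat $J(5,2)$.

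$J(5,2)$ has $10$ vertices and is $6$-regular, with $30$ edges. Note that $|F|\le 3=k(n-k)-3$ guarantees $\delta(J(5,2)-F)\ge 3$. I would use the automorphism group $S_5$, which acts on the edge set, to enumerate faulty sets only up to isomorphism. For each orbit representative $F$ with $|F|\le 3$, I would then fix the endpoint pairs $\{u,v\}$ up to the stabilizer of $F$. For each remaining triple $(F,u,v)$, a backtracking search for a Hamiltonian $u$-$v$ path in $J(5,2)-F$ settles the case. There are only $\binom{30}{3}+\binom{30}{2}+30+1$ fault sets and $45$ endpoint pairs before symmetry reduction, so the search is tiny. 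It is also monotone: it suffices to check $|F|=3$ exactly, since deleting fewer edges only helps.

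\textbf{Structural cross-check.} For a human-readable alternative, I would split $J(5,2)$ by an element $i$ into $X\cong K_4$ and $Y\cong J(4,2)$, joined by a perfect matching of size $4$... but $|X|=4$ and $|Y|=6$, so each vertex of $X$ has $3$ neighbours in $Y$.

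\begin{itemize}
\item[(i)] Choose $i$ so that the faults are spread favourably. With $3$ faults and $5$ choices of $i$, some $i$ puts at most one fault in $Y$. Then $Y-F_Y$ is Hamiltonian-connected by Lemma \ref{FT_HC_J42}.
\item[(ii)] Build the path by cases on where $u$ and $v$ lie. If both lie in $Y$, take a Hamiltonian $u$-$v$ path in $Y$ and replace one of its edges $yy'$ by a detour $y\to$ (a Hamiltonian path of $X$) $\to y'$, using fault-free cross edges. If $u\in X$ and $v\in Y$, traverse $X$ Hamiltonianly from $u$ to some $x$, cross a fault-free edge to $y\ne v$, then finish with a Hamiltonian $y$-$v$ path in $Y$. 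If both lie in $X$, route through $Y$ via two cross edges.
\end{itemize}

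\textbf{Main obstacle.} The hard part is step (ii) when the faults concentrate in $X\cong K_4$. Since $K_4$ tolerates no faults by Corollary \ref{FT_HC_Kn}, such faults must be avoided by choosing the path inside $X$ rather than relying on Hamiltonian connectivity of $X$, which leads to many small cases. This case explosion is precisely why I would rely on the exhaustive verification as the actual proof, with the decomposition serving only as a sanity check.
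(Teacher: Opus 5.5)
Your proposal is correct and matches the paper's approach. The paper likewise obtains $J(5,3)$ from $J(5,2)$ via the isomorphism $J(5,2)\cong J(5,3)$, and settles $J(5,2)$ by exhaustive computer verification over all admissible fault sets and endpoint pairs; your symmetry and monotonicity reductions (checking only $|F|=3$, up to the action of $S_5$) are valid refinements, and you rightly treat the $X\cong K_4$, $Y\cong J(4,2)$ decomposition only as an incomplete cross-check, not as the proof.
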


Before presenting our main result, we first note a natural upper
bound for edge-fault Hamiltonian connectivity. Let $G$ be
a graph with minimum degree $\delta$. If $\delta-2$ edges incident
with a vertex $x$ fail, then $x$ has only two fault-free neighbors,
say $u$ and $v$. Any Hamiltonian path joining $u$ and $v$ would
necessarily contain the subpath $\langle u,x,v\rangle$, preventing
the path from spanning the remaining vertices. Hence, the ordinary
edge-fault Hamiltonian-connectivity bound of $G$ cannot exceed
$\delta-3$. Since $J(n,k)$ is $k(n-k)$-regular, its edge-fault
Hamiltonian-connectivity bound is at most $k(n-k)-3$. We next show
that this upper bound is attained by Johnson graphs.
    
\begin{theorem}\label{FT_HC_Johnson}
For $n\ge 4,1\le k \le n-1$, $J(n,k)$ is $(k(n-k)-3)$-edge-fault-tolerant Hamiltonian-connected. 
\end{theorem}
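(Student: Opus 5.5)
We will argue by induction on $n$, using the decomposition of $J(n,k)$ into $X(i)\cong J(n-1,k-1)$ and $Y(i)\cong J(n-1,k)$. Since $J(n,k)\cong J(n,n-k)$, we may assume $k\le n/2$. The case $k=1$ is Corollary \ref{FT_HC_Kn}: $J(n,1)\cong K_n$ and $k(n-k)-3=n-4$. The small base cases $J(4,2)$ and $J(5,2)\cong J(5,3)$ are Lemmas \ref{FT_HC_J42} and \ref{FT_HC_J52}; these are exactly the cases with $k\ge 2$ and $n\le5$. So assume $n\ge6$, $2\le k\le n/2$, and $|F|\le k(n-k)-3$. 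Let $u,v$ be distinct vertices.

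\textbf{Choosing the split coordinate.} Every edge $AB$ with $|A\cap B|=k-1$ crosses between $X(i)$ and $Y(i)$ exactly for the two indices $i\in A\triangle B$. Summing over all $n$ choices of $i$, the faulty edges lying inside $X(i)$ or $Y(i)$ number $(n-2)|F|$ in total. Hence some $i$ has at most $\frac{n-2}{n}|F|$ faults inside the two halves. We want $|F_X|\le (k-1)(n-k)-3$ and $|F_Y|\le k(n-k-1)-3$, so that the induction hypothesis applies to both halves. When $|F|$ is near the maximum, averaging alone will not give this. In that case we instead pick $i$ to maximize the number of crossing faults. Each half loses roughly $n-k$ or $k$ from its bound relative to $k(n-k)-3$, so we need at least about $\min(k,n-k)$ faults in the crossing edges. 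We also prefer an $i$ separating $u$ from $v$, i.e. $i\in u\triangle v$, which is available whenever $u\ne v$. I expect to handle this by a counting argument combined with case analysis on whether the faults are concentrated.

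\textbf{Case A: $u\in X$, $v\in Y$.} Each vertex of $X$ has $n-k$ neighbours in $Y$, and the crossing edges form a $(n-k)$-to-$k$ biregular bipartite structure with many edges. First pick a vertex $x\in X\setminus\{u\}$ with a fault-free crossing edge $xy$, $y\in Y\setminus\{v\}$. Then take a Hamiltonian $u$–$x$ path in $X-F_X$ and a Hamiltonian $y$–$v$ path in $Y-F_Y$ by induction, and join them through $xy$. There are about $\binom{n-1}{k-1}(n-k)$ crossing edges, far more than $|F|+2(n-k)$, so such an edge exists.

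\textbf{Case B: $u,v$ in the same half, say $X$.} Take a Hamiltonian $u$–$v$ path $P$ in $X-F_X$. Choose an edge $xx'$ of $P$ whose endpoints have fault-free crossing edges $xy$, $x'y'$ with $y\ne y'$ in $Y$. Replace $xx'$ by $x\,y\,(\text{Ham. path in }Y-F_Y)\,y'\,x'$. Since $P$ has $\binom{n-1}{k-1}-1$ edges, a pigeonhole count against $|F|$ finds such an edge. The case $u,v\in Y$ is symmetric.

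\textbf{Main obstacle.} The hardest step is the situation where the induction hypothesis fails for a half because $|F_X|$ or $|F_Y|$ exceeds the bound. This happens when almost all faults are internal for every $i$, for example when they are concentrated around one vertex $w$ of near-minimum residual degree. I would treat this as a separate case. If some vertex $w$ has at least $k(n-k)-3-c$ incident faults, choose $i$ so that many of $w$'s faulty edges are crossing; this is possible since $w$'s edges are spread over all coordinates. Otherwise, the averaging argument already gives both bounds. When a half still exceeds its bound by one or two edges, I would temporarily mark one faulty edge as fault-free, apply induction, and if the resulting path uses that edge, reroute it through the other half as in Case B. To route the two ends in the other half at the same time, Lemma \ref{P2C_Johnson} (paired 2-coverability) can be used. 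Making this patch work uniformly for all $k\ge2$ and $n\ge6$ is where I expect most of the case analysis.
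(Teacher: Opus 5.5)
There is a genuine gap, located exactly at the step you call the main obstacle. You never show that some coordinate $i$ satisfies both $|F_{X(i)}|\le (k-1)(n-k)-3$ and $|F_{Y(i)}|\le k(n-k-1)-3$, and the tools you propose cannot give it. Averaging the total internal count $(n-2)|F|$ only controls $|F_X|+|F_Y|$, and even that average is too large. At $|F|=k(n-k)-3$ one has $\frac{n-2}{n}|F|-[(k-1)(n-k)-3]=\frac{(n-k)(n-2k)+6}{n}>0$. Since this bound uses only $|F|$, your claim ``otherwise, the averaging argument already gives both bounds'' is unjustified however the faults are spread.

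The missing idea is to double count the two halves separately. A faulty edge lies in $X(i)$ exactly for the $k-1$ indices in the intersection of its endpoints, and in $Y(i)$ exactly for the $n-k-1$ indices outside their union. Hence $\sum_i|F_{X(i)}|=(k-1)|F|$ and $\sum_i|F_{Y(i)}|=(n-k-1)|F|$.
\begin{itemize}
\item By the first identity, the set $A$ of indices with $|F_{X(i)}|\ge (k-1)(n-k)-2$ has $|A|\le k$. Otherwise $|F|\ge\frac{k+1}{k-1}[(k-1)(n-k)-2]>k(n-k)-3$ for $n\ge 6$.
\item By the second identity, suppose all of the at least $n-k$ indices outside $A$ had $|F_{Y(j)}|\ge k(n-k-1)-2$. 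Then $|F|\ge k(n-k)-\frac{2(n-k)}{n-k-1}$, which exceeds $k(n-k)-3$ whenever $n-k>3$.
\end{itemize}
This settles the choice of split for all $n\ge 6$, $2\le k\le n/2$, except possibly $(n,k)=(6,3)$. It needs no case analysis on concentration and no requirement that $i$ separate $u$ from $v$. With a fixed good $i$, your Cases A and B cover all terminal positions.

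The second problem is that the exception is real, and your concentration heuristic fails on it. In $J(6,3)$, let $F$ be the six edges from $w=\{1,2,3\}$ to $\{1,2,4\},\{1,2,5\},\{1,3,5\},\{1,3,6\},\{2,3,6\},\{2,3,4\}$. The surviving edges at $w$ go to $\{1,2,6\},\{1,3,4\},\{2,3,5\}$. These cross at $\{3,6\}$, $\{2,4\}$ and $\{1,5\}$ respectively. So for every $i$, one of the three crossing edges of $w$ is fault-free, and four faults remain inside a half whose bound is $3$. No split works, so ``choose $i$ so that many of $w$'s faulty edges are crossing'' cannot succeed, and a separate argument is needed.

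Your patch (declare one faulty edge fault-free, apply induction, reroute if the edge is used) has the right flavour, but as written it is not a proof. Rerouting through the other half is available only when that half is not already part of the path, i.e.\ when both terminals lie in the overloaded half. When $u,v\in X$, or $u\in X$ and $v\in Y$ with $Y$ overloaded, $X$ is already committed. Other devices are then needed, as in Lemma~\ref{FT_HC_J63_Special}:
\begin{itemize}
\item make the endpoints of the marked edge the terminals of the $Y$-path, so that edge cannot be used;
\item use paired $2$-coverability of $X$;
\item or give explicit paths.
\end{itemize}
Finally, without the separate counting above you have no control over how far a half exceeds its bound. Your assertion that the excess is only ``one or two edges'' is therefore unsupported for general $(n,k)$.
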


\begin{proof} Let $F$ be a set of faulty edges in $J(n, k)$ satisfying $|F| \le (n-k)k - 3$. It follows from Corollary \ref{FT_HC_Kn}, and Lemmas \ref{FT_HC_J42} and \ref{FT_HC_J52},
  that the statement holds for $J(n,1)$, $J(n,n-1)$ with $n \ge 4$, and for $J(4,2)$, $J(5,2)$ and $J(5,3)$. 

When considering $J(n,k)$ with $n\ge 6$ and $2\le k \le n-2$, the induction hypotheses are: $J(m,k')$ is $((m-k')k'-3)$-edge-fault-tolerant Hamiltonian-connected whenever $1 < k' < k$ and $m > k'$, or $J(m,k)$ is $((m-k)k-3)$-edge-fault-tolerant Hamiltonian-connected whenever $m < n$ and $k < m$.

\noindent\textbf{Case 1.} $n<2k$. Then $n-k<k$. Since $J(n,k)\cong J(n,n-k)$, the induction hypothesis applied to $J(n,n-k)$ yields that $J(n,n-k)$ is $((n-k)k-3)$-edge-fault-tolerant Hamiltonian-connected. Hence, $J(n,k)$ has the same property.

\noindent\textbf{Case 2.} $n \ge 2k$. Let
\[
    A=\left\{i\in [n]:\left|F_{X(i)}\right|\ge (n-k)(k-1)-2\right\}.
\]
We claim that $|A|\le k$. Suppose on the contrary that $|A|\ge k+1$. Then
\[
    \sum_{i\in A}\left|F_{X(i)} \right| \ge [(n-k)(k-1)-2](k+1).
\]
Since the two endpoints of each edge share $k-1$ common elements, the sum $\sum_{i\in A}\left|F_{X(i)}\right|$ counts each edge at most $k-1$ times. Thus,
\begin{align*}
    |F|&\ge \frac{1}{k-1}\sum_{i\in A}\left|F_{X(i)} \right|\\
    &\ge [(n-k)(k-1)-2]\frac{k+1}{k-1}\\
    &=(n-k)k + n - \left(k-1+\frac{4}{k-1}+3\right).
\end{align*}
Since $2\le k \le \dfrac{n}{2}$, we have
\[
    k-1+\frac{4}{k-1}+3 \le \max \{8,\frac{n}{2}+\frac{8}{n-2}+2 \}.
\]
Note that $n\ge 6 $, so we have
\[
    |F|\ge (n-k)k -2 > (n-k)k -3.
\]
Hence, the claim holds.

Given that $|A|\leq k$, define $B=[n]\setminus A$. Consequently, $|B| \ge n-k$. Next, we demonstrate by contradiction that $B$ must contain at least one element $j$ such that $|F_{Y(j)}| \le (n-k-1)k-3$, except for the case where $n=6$ and $k=|A|=|B|=3$. Suppose no such $j$ exists in $B$. Then,
\[
    \sum_{j\in B}\left|F_{Y(j)} \right| \ge [(n-k-1)k-2]\cdot|B|.
\]
If an edge is contained in $F_{Y(j)}$, then the two endpoints of the edge do not contain $j$. Since the union of the $k$-sets corresponding to the endpoints of one edge contains $k+1$ elements, the sum $\sum_{j\in B}\left|F_{Y(j)} \right|$ counts each edge at most $n-k-1$ times. Then
\begin{align*}
    |F|&\ge \frac{1}{n-k-1}\sum_{j\in B}\left|F_{Y(j)} \right|\\
    &\ge \frac{|B|}{n-k-1} [(n-k-1)k-2] \\
    &\ge (n-k)k - \frac{2(n-k)}{n-k-1}.
\end{align*}
Note that $2\le k \le \dfrac{n}{2}$ and $n\ge 6$. \\
(1) If $n\ne 6$ or $k\ne 3$, we have $n-k > 3$, which implies $\frac{2(n-k)}{n-k-1} < 3$. Thus, 
\[
    |F|\ge (n-k)k - \frac{2(n-k)}{n-k-1} > (n-k)k - 3,
\]
which contradicts the assumption that $|F| \le (n-k)k - 3$. \\
(2) If $n=6$ and $k=3$, we have $n-k = 3$, which yields
\[
    |F|\ge (n-k)k - \frac{2(n-k)}{n-k-1}=(n-k)k - 3.
\]
This implies that $|F|=6$ and $|B| \ge n-k = 3$. If $|B|>3$ (i.e., $|B| \ge 4$), then
\[
    |F|\ge \frac{4}{5-k}[(5-k)k-2] = 8 > 6,
\]
which is impossible since $|F| \le 6$.

Therefore, except for the case $n=6$ and $k=|A|=|B|=3$, there exists an element $j\in B$ such that
\[
|F_{X(j)}|\le (n-k)(k-1)-3
\quad\text{and}\quad
|F_{Y(j)}|\le (n-k-1)k-3.
\]
Fix such an element $j$, and write $X=X(j)$ and $Y=Y(j)$ in what follows.

Let $u$ and $v$ be any two distinct vertices of $J(n,k)$. We distinguish the following cases according to the locations of $u$ and $v$ in the induced subgraphs $X$ and $Y$. For the case $n=6$ with $k=|A|=|B|=3$, if there exists an element $j$ such that the faulty edge sets of the two induced subgraphs $X$ and $Y$ determined by $j$ satisfy the above inequality, then the desired result follows directly by induction. Otherwise, Lemma \ref{FT_HC_J63_Special} establishes that $J(6,3)$ remains $6$-edge-fault Hamiltonian-connected. 

\noindent\textbf{Case 2.1.} $u, v \in X$. By the induction hypothesis, there exists a Hamiltonian path joining $u$ and $v$ in $X-F$, which we denote by $\langle u=x_1, x_2, \ldots, x_{t-1}, x_t = v \rangle$, where $t=\binom{n-1}{k-1}$. For $n\ge 6$ and $n\ge 2k \ge 4$, we have
\begin{align*}
    &\left\lfloor\dfrac{\binom{n-1}{k-1}}{2}\right\rfloor (n-k)-[(n-k)k-3]\\
    \ge & \frac{1}{2}\binom{n-1}{k-1}(n-k) - \frac{1}{2}(n-k)-(n-k)k+3\\
    \ge & \frac{1}{2}\left[\binom{n-1}{k-1}-2k-1\right](n-k) + 3\\
    \ge & 3.
\end{align*}
This implies that there exist two non-faulty edges $x_ly'$ and $x_{l+1}y''$ such that $y', y'' \in V(Y)$ and $y' \neq y''$. By the induction hypothesis, there exists a Hamiltonian path joining $y'$ and $y''$ in $Y-F$, denoted by $\langle y', P_1, y'' \rangle$. Then $\langle u, \ldots,x_l,y', P_1 , y'',$ $x_{l+1},\ldots, v \rangle$ is a Hamiltonian path joining $u$ and $v$ in $J(n,k)-F$.

\noindent\textbf{Case 2.2.} $u, v \in Y$. By the induction hypothesis, there exists a Hamiltonian path joining $u$ and $v$ in $Y-F$, denoted by $\langle u=y_1, y_2, \ldots, y_{s-1}, y_s = v \rangle$, where $s=\binom{n-1}{k}$. For $n\ge 6$ and $n\ge 2k \ge 4$, we have
\begin{align*}
    &\left\lfloor\dfrac{\binom{n-1}{k}}{2}\right\rfloor k-[(n-k)k-3]\\
    \ge & \frac{1}{2}\binom{n-1}{k}k - \frac{1}{2}k-(n-k)k+3\\
    \ge & \frac{1}{2}\left[\binom{n-1}{k}-2n+2k-1\right]k + 3\\
    \ge & 3.
\end{align*}
This implies that there exist two non-faulty edges $x'y_m$ and $x''y_{m+1}$ such that $x', x'' \in V(X)$ and $x' \neq x''$. By the induction hypothesis, there exists a Hamiltonian path joining $x'$ and $x''$ in $X-F$, denoted by $\langle x', P_2, x'' \rangle$. Then $\langle u, \ldots,y_m,x',P_2,x'',y_{m+1},$ $\ldots,v \rangle$ is a Hamiltonian path joining $u$ and $v$ in $J(n,k)-F$.\\
\noindent\textbf{Case 2.3.} $u \in X$ and $v \in Y$. Let $S = \{ab\in E(J(n,k)):a\in X\setminus\{u\},b\in Y\setminus\{v\}\}$. Then
 \[
     |S|\ge \binom{n-1}{k-1}(n-k)-n > (n-k)k - 3 \geq |F|.
 \]
 This implies that there exists an edge $ab \in S - F$. By the induction hypothesis, there exists a Hamiltonian path joining $u$ and $a$ (resp.\ $v$ and $b$) in $X - F$ (resp.\ $Y - F$),  denoted by $\langle u, P_3, a \rangle$ (resp.\ $\langle b, P_4, v \rangle$). Clearly, the path $\langle u, P_3, a, b, P_4, v \rangle$ is a Hamiltonian path joining $u$ and $v$ in $J(n,k) - F$. 
 \end{proof}

In the following lemma, we deal with the exceptional case from Theorem 7.

\begin{lemma}\label{FT_HC_J63_Special}
Let $F\subseteq E(J(6,3))$ with $|F|\le6$, and define
\[
A=\{i\in[6]:|F_{X(i)}|\ge4\},\qquad B=[6]\setminus A.
\]
Suppose $|A|=|B|=3$ and $|F_{Y(j)}|\ge4$ for every $j\in B$.
Then $J(6,3)-F$ is Hamiltonian-connected.
\end{lemma}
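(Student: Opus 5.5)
The plan is to show that the hypotheses pin down $F$ almost completely. We then either find a decomposition that the inductive argument of Theorem \ref{FT_HC_Johnson} can handle, or reduce to a small number of configurations up to symmetry and settle those directly.

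\textbf{Step 1 (rigidity of $F$).} In $J(6,3)$, an edge $xy$ has $|x\cap y|=2$ and $|x\cup y|=4$. It therefore lies in $X(i)$ for exactly the two elements $i\in x\cap y$, and in $Y(j)$ for exactly the two elements $j\notin x\cup y$. Now $\sum_{i\in A}|F_{X(i)}|\ge 3\cdot 4=12$, and each edge is counted at most twice, so $|F|\ge 6$. Hence $|F|=6$, $|F_{X(i)}|=4$ for every $i\in A$, and every faulty edge $xy$ satisfies $x\cap y\subseteq A$. The same count with $\sum_{j\in B}|F_{Y(j)}|\ge 12$ gives $|F_{Y(j)}|=4$ for every $j\in B$, and $[6]\setminus(x\cup y)\subseteq B$ for every faulty edge. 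Consequently each faulty edge has the form $\{a,a',b\}\{a,a',b'\}$ with $\{a,a'\}\subset A$ and $b\ne b'$ in $B$. These are exactly the "swap two elements of $B$ while keeping a fixed pair of $A$" edges. There are $3\cdot 3=9$ such edges, and $F$ is a $6$-subset of them. The balance conditions say that each $i\in A$ lies in the common pair of exactly $4$ faulty edges, and that each $j\in B$ is the missing $B$-element of exactly $4$ faulty edges. Equivalently, $F$ corresponds to a $3\times3$ $0/1$ array with all row and column sums equal to $2$, with rows indexed by $A$ (the complementary element of the pair) and columns by $B$ (the missing element). Its complement in the array is then a permutation matrix. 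Up to the action of $\mathrm{Sym}(A)\times\mathrm{Sym}(B)$, $F$ is therefore \emph{unique}.

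\textbf{Step 2 (constructing paths).} Because $F$ is unique up to symmetry, we may fix $A=\{1,2,3\}$, $B=\{4,5,6\}$ and one explicit $F$. The first route is constructive. Choose $j\in A$, say $j=1$. Then $X(1)\cong J(5,2)$ and $Y(1)\cong J(5,3)$. Counting which faulty edges avoid or contain $1$ (from the explicit $F$) should give $|F_{Y(1)}|\le 2$, while $|F_{X(1)}|=4$ is one too many for Lemma \ref{FT_HC_J52}. To handle this, pick a faulty edge $e\in F_{X(1)}$ and apply Lemma \ref{FT_HC_J52} to $X(1)$ with fault set $F_{X(1)}\setminus\{e\}$.
\begin{itemize}
\item If the resulting path avoids $e$, we splice in $Y(1)$ exactly as in Cases 2.1--2.3 of Theorem \ref{FT_HC_Johnson}. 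This works because the cross edges between $X(1)$ and $Y(1)$ are essentially fault-free, so the counting there goes through with a large margin.
\item If the path uses $e=xy$, we cut it at $e$ and reconnect $x$ and $y$ through $Y(1)$ via fault-free cross edges $xy'$ and $yy''$. When $u$ and $v$ also lie in $Y(1)$, we instead use a paired $2$-cover of $Y(1)$ from Lemma \ref{P2C_Johnson}; its fault-tolerance would have to be checked, or we choose the cover so that it avoids the at most two faulty edges.
\end{itemize}

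\textbf{Main obstacle.} The hard part is the last step: guaranteeing that the paired $2$-cover of $Y(1)$ avoids $F_{Y(1)}$, and handling endpoint placements in which $u$ or $v$ coincides with an endpoint of $e$. If this becomes unwieldy, the fallback is consistent with how the paper treats Lemmas \ref{FT_HC_J42} and \ref{FT_HC_J52}. By Step 1 only one fault configuration exists up to isomorphism, so an exhaustive computer check of all $\binom{20}{2}$ endpoint pairs in $J(6,3)-F$ for this single $F$ completes the proof. That check can be further reduced using the stabilizer of $F$ in $\mathrm{Sym}(A)\times\mathrm{Sym}(B)$ together with the complementation automorphism $x\mapsto[6]\setminus x$, which swaps the roles of $A$ and $B$.
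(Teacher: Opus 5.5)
Your double count in Step~1 is correct up to the two conditions $x\cap y\subseteq A$ and $[6]\setminus(x\cup y)\subseteq B$, and it is cleaner than the paper's endpoint count. The next inference, however, is wrong. Both missing elements of a faulty edge lie in $B$, so $A\subseteq x\cup y$ and $x\cup y=A\cup\{b\}$ for a single $b\in B$. Since $x\cap y$ is a $2$-subset of $A$, one endpoint of \emph{every} faulty edge is the vertex $A=\{1,2,3\}$ itself. Each edge $\{a,a',b\}\{a,a',b'\}$ you list misses $a''\in A$, so none of them can be faulty. Indeed, each such edge misses only one element of $B$, so six of them give $6$ incidences, not the $12$ your own balance condition (every $j\in B$ missed by four faulty edges) requires. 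In the correct array, rows are indexed by the element of $A$ removed from $\{1,2,3\}$ and columns by the element of $B$ added. Uniqueness up to $\mathrm{Sym}(A)\times\mathrm{Sym}(B)$ therefore survives, and $F$ is the paper's star $\{ab_1,\dots,ab_6\}$ at $a=\{1,2,3\}$. But the explicit $F$ you would fix, and any computer check run on it, is the wrong graph. You also miss the feature that drives the lemma: $a$ keeps only three fault-free neighbours, namely $\{1,2,6\}$, $\{1,3,4\}$, $\{2,3,5\}$ in the paper's labelling.

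Step~2 rests on the wrong picture and is not carried out. With the true $F$ and the split along $1$, $F_{Y(1)}=\emptyset$, so the worry about a paired $2$-cover of $Y(1)$ meeting faults is moot. Meanwhile $a$ has degree $2$ in $X(1)-F$, and two of its three cross edges are faulty, leaving only the edge $a\{2,3,5\}$. The cross edges are thus not ``essentially fault-free'' precisely at the vertex that matters, and every detour at $a$ is forced through $\{2,3,5\}$. There is a further gap in the mixed case $u\in X(1)$, $v\in Y(1)$. Cutting the $J(5,2)$-path at $e=xy$ leaves two $X(1)$-segments, so reconnecting through $Y(1)$ while still ending at $v\in Y(1)$ requires covering $Y(1)$ by two disjoint paths, which your splice ``as in Cases~2.1--2.3'' does not supply. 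These terminal configurations are the whole content of the lemma, and you leave them open.

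The paper settles them by hand in the mirror-image split along $4\in B$, where $X(4)$ is fault-free and $Y(4)\cong J(5,3)$ carries four faults at $a$. It uses three tools:
\begin{enumerate}
\item Lemma~\ref{FT_HC_J52} applied to $F\setminus\{ab_2\}$, with a detour through $X$ whenever the resulting path uses $ab_2$ (terminals in $Y$).
\item The paired $2$-cover of $X$ when both terminals lie in $X$ and $a$'s unique live $X$-neighbour $b_7=\{1,3,4\}$ is not a terminal, and an Ore argument on $X-\{u\}$ when $u=b_7$.
\item An explicit table of Hamiltonian paths in $Y-F$ from $b_2$ for mixed terminals.
\end{enumerate}
Once Step~1 is corrected your plan essentially becomes this one. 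You still have to do that case analysis, or actually run the exhaustive check on the correct $F$.
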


\begin{proof}

Since two adjacent vertices of $J(6,3)$ share exactly two elements,
each faulty edge is counted exactly twice in
$\sum_{i=1}^{6}|F_{X(i)}|$. Hence,
\[
2|F|
=
\sum_{i=1}^{6}|F_{X(i)}|
\geq
\sum_{i\in A}|F_{X(i)}|
\geq 3\cdot 4=12.
\]
Since $|F|\leq 6$, it follows that $|F|=6$. Moreover, equality holds
throughout. Consequently,
\[
|F_{X(i)}|=4
\qquad\text{for every }i\in A,
\]
and no faulty edge is counted by $F_{X(j)}$ for any $j\in B$.
Thus, the two common elements of the endpoints of every faulty edge
belong to $A$.

Without loss of generality, let $A = \{1, 2, 3\}$ and $B = \{4, 5, 6\}$. For convenience, let $CE_t$ ($t = 1, 2, \cdots, 6$) denote the set of elements common to the two endpoints of the $t$-th faulty edge in $F$. By the definition of $A$, among the six sets $CE_t$, at least four contain the element $1$, at least four contain the element $2$, and at least four contain the element $3$. Since the total number of elements in the six sets $CE_t$ is exactly 12, it follows that there are exactly two sets containing $\{1,2\}$, exactly two sets containing $\{1,3\}$, and exactly two sets containing $\{2,3\}$. On the other hand, by the assumption, for any $j \in B$, we have $|F_Y(j)| \ge 4$. Since $j$ cannot be a common element of the two endpoints of a faulty edge, each $j \in B$ can occur in at most two endpoints of faulty edges. Therefore, the number of endpoints containing elements from $\{4,5,6\}$ is at most six. Consequently, at least six endpoints correspond to vertices containing the elements $\{1,2,3\}$. Since all these vertices are endpoints of faulty edges, the 12 endpoints must consist of six copies of $\{1,2,3\}$, two vertices containing $4$, two vertices containing $5$, and two vertices containing $6$. It is straightforward to verify that all such configurations are isomorphic to one another. Hence, we may assume that $a=\{1,2,3\}$, $b_1 = \{1, 2, 4\}$, $b_2 = \{1, 2, 5\}$, $b_3 = \{1, 3, 5\}$, $b_4 = \{1, 3, 6\}$, $b_5 = \{2, 3, 6\}$, $b_6 = \{2, 3, 4\}$, and $F=\{ab_1,ab_2,ab_3,ab_4,ab_5,ab_6\}$.

Let $u$ and $v$ be any two distinct vertices of $J(6, 3)$. We now discuss the following cases based on the distribution of these vertices in $X(4)$ and $Y(4)$. It is clear that $X - F = X$.

\noindent\textbf{Case 1.} $u, v \in X$.
Then $N_{J(6,3)-F}(a) \cap V(X)$ has only one vertex $b_7 = \{1, 3, 4\}$.

\noindent\textbf{Case 1.1.} $b_7 \in \{u, v\}$, say $u = b_7$.
Note that $b_2 = \{1,2,5\} \in V(Y)$. Let $F_1 = F \setminus \{ab_2\}$. Since $F$ contains exactly four edges entirely within $Y$, $F_1$ leaves exactly three faulty edges in $Y$. By Lemma \ref{FT_HC_J52}, $J(5,3)$ is 3-edge-fault-tolerant, so there exists a Hamiltonian path in $Y - F_1$ joining $a$ and $b_2$, denoted by $\langle a, Q_1, b_2 \rangle$. Choose a vertex $b_2' \in N_X(b_2) \cap X$ such that $b_2' \ne v$. Since $X \cong J(5,2)$ is a $6$-regular graph on ten vertices, the vertex-deleted subgraph $X-\{u\}$ has nine vertices with a minimum degree of five. By the classic Ore-type condition for Hamiltonian connectivity ($d(x)+d(y) \ge 5+5 = 10 \ge 9+1$), $X- \{u\}$ remains Hamiltonian-connected. Thus, there exists a Hamiltonian path in $X - \{u\}$ joining $b_2'$ and $v$, denoted by $\langle b_2', P_1, v \rangle$. Then the path $\langle u, a, Q_1, b_2, b_2', P_1, v \rangle$ is a Hamiltonian path in $J(6, 3) - F$ joining $u$ and $v$.

\noindent\textbf{Case 1.2.} $b_7 \notin \{u, v\}$.
Note that $b_2 = \{1,2,5\}$ has exactly three neighbors in $X$. Since we only need to avoid the two vertices $u$ and $v$, there exists at least one neighbor of $b_2$ in $X$ that is neither $u$ nor $v$. Choose such a vertex as $b_2'' \in N_X(b_2) \cap X$. By Lemma \ref{P2C_Johnson}, there exist two vertex-disjoint paths $P_{11}$ and $P_{12}$ of $P2C(u, b_7; v, b_2'')$ that cover $X$. Adding the fault-free cross edges $ab_7$ and $b_2b_2''$, and concatenating $P_{11}$, $Q_1$, and $P_{12}$ yields a Hamiltonian path in $J(6, 3) - F$ joining $u$ and $v$.

\noindent\textbf{Case 2.} $u, v \in Y$.
Recall that $F_1 = F \setminus \{ab_2\}$ leaves exactly 3 faulty edges in $Y$. By Lemma \ref{FT_HC_J52}, there exists a Hamiltonian path $Q_2$ in $Y - F_1$ joining $u$ and $v$. However, $Q_2$ might contain the faulty edge $ab_2$. We discuss the following two subcases based on whether $ab_2$ is contained in $Q_2$.

\noindent\textbf{Case 2.1.} $ab_2 \in E(Q_2)$. We choose two distinct vertices $c_1, c_2 \in X$ such that $c_1 \in N_X(a)$ and $c_2 \in N_X(b_2)$. By Lemma \ref{HC_Johnson}, there exists a Hamiltonian path $P_2$ in $X - F$ joining $c_1$ and $c_2$. By deleting the edge $ab_2$, adding the edges $ac_1$ and $b_2c_2$, and concatenating these segments with $P_2$, we obtain a Hamiltonian path in $J(6, 3) - F$ joining $u$ and $v$.

\noindent\textbf{Case 2.2.} $ab_2 \notin E(Q_2)$. Then choose any edge $p_1p_2 \in E(Q_2)$ and two distinct vertices $c_1', c_2' \in X$ such that $c_1' \in N_X(p_1)$ and $c_2' \in N_X(p_2)$. By Lemma \ref{HC_Johnson}, there exists a Hamiltonian path $P_2'$ in $X - F$ joining $c_1'$ and $c_2'$. By deleting the edge $p_1p_2$, adding the cross edges $p_1c_1'$ and $p_2c_2'$, and concatenating the resulting segments of $Q_2$ with $P_2'$, we obtain a Hamiltonian path in $J(6, 3) - F$ joining $u$ and $v$.

\begin{table}[ht]
\centering
\renewcommand{\arraystretch}{1.2}
\begin{tabular}{c|c|c}
\hline
$b_2$ & $v$ & Hamiltonian path \\
\hline
125 & 123 & $\langle 125,126,256,236,136,356,156,135,235,123 \rangle$\\
125 & 126 & $\langle 125,135,136,236,256,156,356,235,123,126 \rangle$\\
125 & 135 & $\langle 125,126,123,235,256,236,136,356,156,135 \rangle$\\
125 & 136 & $\langle 125,126,123,235,135,156,256,236,356,136 \rangle$\\
125 & 156 & $\langle 125,135,136,236,126,123,235,256,356,156 \rangle$\\
125 & 235 & $\langle 125,135,136,236,256,356,156,126,123,235 \rangle$\\
125 & 236 & $\langle 125,135,136,126,123,235,256,156,356,236 \rangle$\\
125 & 256 & $\langle 125,135,136,236,126,123,235,356,156,256 \rangle$\\
125 & 356 & $\langle 125,126,123,235,135,136,156,256,236,356 \rangle$\\
\hline
\end{tabular}
\caption{Hamiltonian paths in $Y-F_Y$ joining $b_2=\{1,2,5\}$ and all other vertices in $Y$. For convenience, the set $\{i_1, i_2, i_3\}$ is abbreviated as $i_1i_2i_3$.}
\label{Table:FT_HC_J63_Special}
\end{table}

\noindent\textbf{Case 3.} $u \in X$ and $v \in Y$. Choose $v'\in\{b_2,b_3,b_4,b_5\}\setminus\{v\}$. By Table \ref{Table:FT_HC_J63_Special}, there exists a Hamiltonian path in $Y - F$ joining $b_2$ and $v$, denoted by $\langle b_2, Q_3, v \rangle$. For any $p \in N_X(b_2) \setminus \{u\}$, by Lemma \ref{HC_Johnson}, there exists a Hamiltonian path in $X - F$ joining $u$ and $p$, denoted by $\langle u, P_3, p \rangle$. Then the path $\langle u, P_3, p, b_2, Q_3, v \rangle$ is a Hamiltonian path in $J(6, 3) - F$ joining $u$ and $v$.
\end{proof}

\section{Edge-fault-tolerant Hamiltonian Connectivity of $J(n,k)$ under Matching Faults}

In the previous section, we showed that under unrestricted edge faults, the maximum number of faulty edges that can be guaranteed while preserving Hamiltonian connectivity of $J(n,k)$ is $k(n-k)-3$. This worst-case bound is determined by fault configurations highly concentrated around a single vertex. To investigate whether substantially larger fault sets can be tolerated when such concentration is excluded, we next consider matching faults, where no two faulty edges share a common endpoint. In this section, we prove that Hamiltonian connectivity is preserved after the deletion of an arbitrary matching.

We first consider the base cases for small networks.

\begin{lemma}\label{FT_M_J42}
For any two adjacent vertices $u$ and $v$ of $J(4,2)$ and any matching $F\subseteq E(J(4,2))$, the graph $J(4,2)-F$ contains a Hamiltonian path joining $u$ and $v$.
\end{lemma}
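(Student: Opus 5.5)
We would prove Lemma \ref{FT_M_J42} by a direct structural argument on $J(4,2)$, with the same exhaustive-search check used for Lemmas \ref{FT_HC_J42} and \ref{FT_HC_J52} as a fallback. The key observation is that $J(4,2)$ is the octahedron $K_{2,2,2}$. Each $2$-subset of $[4]$ is adjacent to every other $2$-subset except its complement. Write the three complementary pairs as $\{a,a'\},\{b,b'\},\{c,c'\}$; two distinct vertices are adjacent exactly when they are not complementary.

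Since the automorphism group of the octahedron is transitive on ordered pairs of adjacent vertices, we may assume $u=a$ and $v=b$. Note that $F$ may contain $ab$ itself; this does no harm, because the path we build need not use $ab$. A matching in $J(4,2)$ has at most three edges, so $F$ covers each vertex at most once.

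Next we list four Hamiltonian $a$-$b$ paths in the octahedron:
\[
P_1=\langle a,c,b',c',a',b\rangle,\quad P_2=\langle a,c',b',c,a',b\rangle,\quad P_3=\langle a,b',c,a',c',b\rangle,\quad P_4=\langle a,b',c',a',c,b\rangle .
\]
Each is valid, since no two consecutive vertices are complementary. The first edges are $ac$, $ac'$, $ab'$, $ab'$, and the last edges are $a'b$, $a'b$, $c'b$, $cb$. The plan is a case analysis on the (at most one) edge of $F$ at $a$ and the (at most one) edge of $F$ at $b$.

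\textbf{Case 1:} the $F$-edge at $a$ is not $ab'$ (or $a$ is uncovered). Then $P_3$ and $P_4$ survive at $a$. At $b$, at most one of $c'b$, $cb$ is faulty, so one of $P_3,P_4$ also survives at $b$. It remains to check the interior edges $b'c,ca',a'c'$ of $P_3$ and $b'c',c'a',a'c$ of $P_4$. If the surviving path meets an interior fault, we use the matching property: that fault saturates two interior vertices. We then switch to $P_1$ or $P_2$, or to a further path such as $\langle a,c,a',c',b',b\rangle$ or $\langle a,c',a',c,b',b\rangle$, which avoid that edge.

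\textbf{Case 2:} $ab'\in F$. Then $a$ and $b'$ are already saturated by $F$, so the remaining faults form a matching on $\{b,a',c,c'\}$, that is, at most two further edges. We use $P_1$, $P_2$ and the two extra paths above to cover all possibilities.

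The main obstacle is keeping this finite case check organized rather than any deep difficulty. The residual symmetry fixing $a$ and $b$ (swapping $c\leftrightarrow c'$, and the reflection exchanging $a\leftrightarrow b$ together with $a'\leftrightarrow b'$) cuts the cases roughly in half. We would therefore enumerate matchings up to this symmetry and exhibit one surviving path per class. If the hand analysis becomes unwieldy, we would state, as the paper does for Lemmas \ref{FT_HC_J42} and \ref{FT_HC_J52}, that the statement is confirmed by the exhaustive-search program over all matchings $F$ and all adjacent pairs $u,v$.
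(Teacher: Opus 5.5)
Your overall strategy is sound and differs from the paper's. The paper splits $J(4,2)$ into $X(i)\cong Y(i)\cong K_3$, settles $F\subseteq E(X,Y)$ directly, and otherwise extends $F$ to a maximal matching and tabulates representatives. Your octahedron model plus arc-transitivity reduces everything to $u=a$, $v=b$ at once.

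\textbf{The gap.} The case check does not go through as written. The two ``further paths'' $\langle a,c,a',c',b',b\rangle$ and $\langle a,c',a',c,b',b\rangle$ are not paths of $J(4,2)$, because their last edge $b'b$ joins complementary vertices. Without them, $P_1,\dots,P_4$ do not suffice:
\begin{itemize}
\item For $F=\{ab',a'b\}$ (your Case 2), $P_3$ and $P_4$ start with $ab'$, and $P_1$ and $P_2$ end with $a'b$.
\item For $F=\{ab,a'c,b'c'\}$ (your Case 1), $P_1$ and $P_4$ use $b'c'$, while $P_2$ and $P_3$ use $a'c$.
\item The mirror image $F=\{ab,a'c',b'c\}$ likewise blocks all four.
\end{itemize}
So the step ``switch to $P_1$ or $P_2$ or a further path'' fails exactly where it is needed.

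\textbf{The repair.} The octahedron has exactly eight Hamiltonian $a$--$b$ paths. Besides $P_1,\dots,P_4$, they are $\langle a,c,a',b',c',b\rangle$, $\langle a,c,b',a',c',b\rangle$, and their images under $c\leftrightarrow c'$. Take
\begin{itemize}
\item $R=\langle a,c,b',a',c',b\rangle$, which avoids both $\{ab,a'c,b'c'\}$ and $\{ab',a'b\}$;
\item $R'=\langle a,c',b',a',c,b\rangle$, which avoids $\{ab,a'c',b'c\}$.
\end{itemize}
Deleting fewer edges only helps, so, as in the paper, it suffices to check maximal matchings. There are $14$ of them: $8$ perfect matchings, and $6$ of size two whose uncovered vertices form a complementary pair. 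One checks that each is avoided by one of $P_1,\dots,P_4,R,R'$.

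You should carry out this enumeration explicitly rather than leave it as a plan or defer to a computer search. Once that is done, your argument is a complete hand proof. The paper also proves this lemma by hand plus a table, not by the exhaustive-search program.
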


\begin{proof}
Since $u$ and $v$ are adjacent in $J(4,2)$, there exists an element $i\in u\setminus v$. Let $X=X(i)=\{x\in V(J(4,2)):i\in x\}$ and $Y=Y(i)=\{y\in V(J(4,2)):i\notin y\}$.
Then $X\cong K_3$ and $Y\cong K_3$. Write $V(X)=\{u_1,u_2,u_3\}$ and $V(Y)=\{v_1,v_2,v_3\}$, where the vertices are indexed so that $u_jv_j\notin E(J(4,2))$ for $j=1,2,3$. Hence, $E(X,Y)=\{u_jv_\ell: j\ne \ell\}$. Since $u\in X$, $v\in Y$, and $uv\in E(J(4,2))$, by relabeling the indices if necessary, we may assume that $u=u_2$, $v=v_1$. The decomposition of $J(4,2)$ into two copies of $K_3$, together with the corresponding cross edges, is illustrated in Fig. \ref{fig:J42}.

\noindent\textbf{Case 1.} $F\subseteq E(X,Y)$. In this case, all edges of $X$ and $Y$ are fault-free. Consider the two cross edges $u_1v_2$ and $u_3v_2$. Since they share the vertex $v_2$ and $F$ is a matching, at most one of them belongs to $F$. If $u_1v_2\notin F$, then $\langle u_2,u_3,u_1,v_2,v_3,v_1\rangle$ is a Hamiltonian path from $u$ to $v$  in $J(4,2)-F$. If $u_1v_2\in F$, then $u_3v_2\notin F$, and hence $\langle u_2,u_1,u_3,v_2,v_3,v_1\rangle$ is a Hamiltonian path from $u$ to $v$   in $J(4,2)-F$.

\begin{figure}
	\centering
	\includegraphics[width=0.35\textwidth]{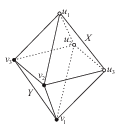}
     \caption{Structure of $J(4,2)$ under the partition $V(J(4,2))=X(i)\cup Y(i)$, where $X(i)\cong Y(i)\cong K_3$ and $u_jv_\ell\in E(J(4,2))$ if and only if $j\neq l$. }
     \label{fig:J42}
\end{figure}

\noindent\textbf{Case 2.} $F\cap(E(X)\cup E(Y))\ne\emptyset$.

By interchanging the roles of $X$ and $Y$ if necessary, we may assume that $F\cap E(X)\ne\emptyset$. Since $X\cong K_3$ and $F$ is a matching, $|F\cap E(X)|=1$. Extend $F$ to a maximal matching $F'$ of $J(4,2)$. Because the unique edge of $F\cap E(X)$ is incident with two of the three vertices of $X$, each of the other two edges of $E(X)$ shares an endpoint with it. Consequently, no additional edge of $E(X)$ can be added when extending $F$, and therefore $|F'\cap E(X)|=1$. Moreover, if a Hamiltonian path from $u$ to $v$ exists in $J(4,2)-F'$, then the same path also lies in $J(4,2)-F$, since $F\subseteq F'$. Thus, it suffices to consider maximal matchings $F'$ satisfying $|F'\cap E(X)|=1$.

Note that $J(4,2)$ is isomorphic to the octahedral graph. With the terminals fixed as $u=u_2$ and $v=v_1$, there are exactly nine maximal matchings satisfying $|F'\cap E(X)|=1$: six perfect matchings of size $3$ and three maximal matchings of size $2$. By the symmetry of $J(4,2)$ and the automorphisms preserving the terminal set $\{u_2, v_1\}$, the nine maximal matchings can be reduced to four essentially different configurations up to isomorphism. These four representative matchings, together with their corresponding fault-free Hamiltonian paths from $u_2$ to $v_1$, are listed in Table~\ref{Table:FT_M_J42}. Therefore, $J(4,2)-F'$ contains a Hamiltonian path from $u$ to $v$ in each case, and hence so does $J(4,2)-F$.

\end{proof}

\begin{table}[htbp]
    \centering
    \renewcommand{\arraystretch}{1.2}
    \begin{tabular}{c|c|c}
    \hline
        Size &  Faulty maximal matching $F$  &  Hamiltonian path \\
       \hline
       \multirow{2}{*}{3} 
       & $\{u_1u_3, u_2v_1, v_2v_3\}$ & $\langle u_2, v_3, u_1, v_2, u_3, v_1\rangle$\\
       & $\{u_1u_3, u_2v_3, v_1v_2\}$ & $\langle u_2, u_1, v_3, v_2, u_3, v_1\rangle$\\
%       & $\{u_2u_3, u_1v_3, v_1v_2\}$ & $\langle u_2, u_1, u_3, v_2, v_3, v_1\rangle$\\
%       & $\{u_2u_3, u_1v_2, v_1v_3\}$ & $\langle u_2, v_3, u_1, u_3, v_2, v_1\rangle$\\
%       & $\{u_1u_2, u_3v_1, v_2v_3\}$ & $\langle u_2, u_3, v_2, u_1, v_3, v_1\rangle$\\
%       & $\{u_1u_2, u_3v_2, v_1v_3\}$ & $\langle u_2, v_3, v_2, u_1, u_3, v_1 \rangle$\\
       \hline
       \multirow{2}{*}{2} 
       & $\{u_2u_3, v_2v_3\}$ & $\langle u_2, v_3, u_1, u_3, v_2, v_1\rangle$\\
       & $\{u_1u_2, v_1v_2\}$ & $\langle u_2, v_3, v_2, u_1, u_3, v_1\rangle$\\
        \hline
    \end{tabular}
    \caption{Hamiltonian paths in $J(4,2)-F$ joining $u_2$ and $v_1$ for all essentially different maximal matching faults.}
    \label{Table:FT_M_J42}
\end{table}

\begin{lemma}\label{FT_M_J52}
$J(5,2)$ is matching-fault-tolerant Hamiltonian-connected.
\end{lemma}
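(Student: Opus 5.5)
We must show that for every matching $F$ of $J(5,2)$ and every two distinct vertices $u,v$, the graph $J(5,2)-F$ has a Hamiltonian $u$-$v$ path. $J(5,2)$ is the complement of the Petersen graph: it has $10$ vertices and is $6$-regular. The plan is to reduce to maximal matchings, as in Lemma \ref{FT_M_J42}, and then combine a structural decomposition with a symmetry-reduced finite check.

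\textbf{Step 1: reduce to maximal matchings.} Since $F\subseteq F'$ implies that every path in $J(5,2)-F'$ also lies in $J(5,2)-F$, we may extend $F$ to a maximal matching $F'$. A maximal matching has between $3$ and $5$ edges, since a matching on $10$ vertices leaving an independent set uncovered needs at least $3$ edges here.

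\textbf{Step 2: attempt a structural argument.} Pick $i\in u\setminus v$ if $u,v$ are adjacent, or any suitable element otherwise. Split $J(5,2)$ into $X=X(i)\cong K_4$ and $Y=Y(i)\cong J(4,2)$, the octahedron. Each vertex of $X$ has $3$ neighbours in $Y$, and each vertex of $Y$ has $2$ neighbours in $X$.

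The matching $F$ meets $X$ in at most $2$ edges. So $X-F$ is $K_4$ minus a matching, which contains a $4$-cycle, and it is Hamiltonian-connected between any two vertices joined by a Hamiltonian path in it.

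Then proceed by the location of $u$ and $v$.
\begin{itemize}
\item \emph{$u\in X$, $v\in Y$.} First try to find a fault-free cross edge $ab$ with $a\in X\setminus\{u\}$ and $b\in Y\setminus\{v\}$. Then take a Hamiltonian $u$-$a$ path in $X-F$ and a Hamiltonian $b$-$v$ path in $Y-F$. The $Y$ part would use Lemma \ref{FT_M_J42} when $b,v$ are adjacent; otherwise it needs a direct check in the octahedron minus a matching.
\item \emph{Both terminals on the same side.} Use the path-insertion trick from the proof of Theorem \ref{FT_HC_Johnson}: take a Hamiltonian path on one side, break one of its edges $x_lx_{l+1}$, and route through the other side via two fault-free cross edges.
\end{itemize}

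The difficulty is that $K_4$ and the octahedron minus a matching are not always Hamiltonian-connected. For example, the octahedron minus a perfect matching is $K_{2,2,2}$ minus a perfect matching, and may fail for some antipodal or nonadjacent pairs. So the structural argument will not close every case cleanly.

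\textbf{Step 3: finish by computer, as the paper does for Lemmas \ref{FT_HC_J42} and \ref{FT_HC_J52}.} I would rely on the exhaustive-search program already described in the paper. It enumerates all maximal matchings of $J(5,2)$ up to the automorphism group $S_5$, which acts on $2$-subsets. For each representative it checks all pairs $\{u,v\}$ up to the stabiliser of the matching. Vertex-transitivity and the small orbit count (matchings of size $3$, $4$ and $5$) keep this small enough to tabulate as in Table \ref{Table:FT_M_J42}.

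\textbf{Main obstacle.} The hard part is making Step 2 cover every configuration. The cases where $F$ saturates the cross edges at the needed terminals, or where $Y-F$ loses Hamiltonian connectivity for the specific pair, are the ones I expect to push the proof to the exhaustive check. So I would present the result primarily as a computer-verified base case, with Step 2 as supporting intuition.
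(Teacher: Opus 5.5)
\textbf{Gap: the proposal stops at an unperformed search.} As written, the proposal does not prove the lemma. Step 2 is abandoned, and everything is deferred to a computer search in Step 3 that you never carry out. A finite search could in principle settle a finite statement, but you cannot borrow the paper's reported computations for it. Two vertices of $J(5,2)$ are non-adjacent exactly when they are disjoint, so $J(5,2)$ has independence number $2$. Hence every maximal matching leaves at most two vertices uncovered and has $4$ or $5$ edges, not ``between $3$ and $5$''. This already exceeds the $3$-edge-fault range of Lemma~\ref{FT_HC_J52}. Your argument therefore only reduces the statement to a new enumeration that is not done.

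\textbf{Missing idea: choose the cross edge so that $b$ is adjacent to $v$.} This closes your Step 2 completely, and it is what the paper does.
\begin{enumerate}
\item Since $u\neq v$ are $2$-sets, $u\setminus v$ is always nonempty. Choosing $i\in u\setminus v$ therefore always gives $u\in X(i)\cong K_4$ and $v\in Y(i)\cong J(4,2)$, so the same-side cases never arise.
\item In $X-F$ you need only a Hamiltonian path starting at $u$ with a free other end, not Hamiltonian connectivity. Such a path exists because $K_4$ minus a matching contains a $4$-cycle; call it $\langle u,u_1,u_2,u_3\rangle$.
\item Write $u_3=\{i,x\}$. Its three $Y$-neighbours are $\{x,y\}$ with $y\notin\{i,x\}$. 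Whether or not $x\in v$, two of them are distinct from $v$ and adjacent to $v$.
\item Since $F$ is a matching, at most one of the two edges from $u_3$ to these vertices is faulty. So there is a fault-free edge $u_3v_1$ with $v_1$ adjacent to $v$.
\item Lemma~\ref{FT_M_J42} is stated precisely for adjacent pairs, regardless of whether $v_1v\in F$. It gives a Hamiltonian $v_1$-$v$ path in $Y-F$, and concatenating finishes the proof.
\end{enumerate}
So the ``$b,v$ non-adjacent'' case you flagged never has to be checked; it is avoided by the choice of $b$.

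\textbf{A correction to your diagnosis.} The octahedron minus a perfect matching is the triangular prism, which is Hamiltonian-connected. The genuine obstructions in $J(4,2)$ come from size-$2$ maximal matchings. For example, with the labels from the proof of Lemma~\ref{FT_M_J42}, deleting $\{u_2u_3,v_2v_3\}$ leaves no Hamiltonian path between the antipodal pair $u_1,v_1$. This is why that lemma is restricted to adjacent pairs, and why the freedom in choosing the cross edge is the crux.
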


\begin{proof} 
Let $u$ and $v$ be two distinct vertices of $J(5,2)$, and let $F$ be a set of faulty edges that forms a matching in $J(5,2)$. Since $u \ne v$, there exists an element $i \in [5]$ such that $i \in u$ and $i \notin v$. Clearly, $X(i) \cong K_4$ and $Y(i) \cong J(4,2)$, with $u \in X(i)$ and $v \in Y(i)$.

Since $X(i) \cong K_4$ and $F$ is a matching, $X(i) - F$ always contains a cycle of length four. Consequently, there exists a Hamiltonian path in $X(i) - F$ starting at $u$, which we denote by $\langle u, u_1, u_2, u_3 \rangle$. 

Next, we need to find a fault-free edge from $u_3$ to a vertex $v_1 \in Y(i)$ such that $v_1$ is adjacent to $v$ in $Y(i)$. Let $u_3 = \{i, x\}$ for some element $x\in[5]$. Since $Y(i)$ consists of all 2-subsets of $[5] \setminus \{i\}$, the vertex $v$ can be written as $v = \{a, b\}$ where $a,b \neq i$. The neighbors of $u_3$ lying in $Y(i)$ consist of exactly three vertices, all of which contain $x$. We consider two cases based on whether $x \in v$.

If $x \in \{a,b\}$, say $x=a$, then $v = \{a,b\}$ is itself a neighbor of $u_3$ in $Y(i)$. The other two neighbors of $u_3$ in $Y(i)$ must be of the form $\{a, c\}$ and $\{a, d\}$ where $c, d \notin \{i, a, b\}$. So they are adjacent to $v$ in $Y(i)$.

If $x \notin \{a,b\}$, let $[5] \setminus \{i, x\} = \{a, b, c\}$. The three neighbors of $u_3$ in $Y(i)$ are $\{x, a\}$, $\{x, b\}$, and $\{x, c\}$. Notice that $\{x, a\}$ and $\{x, b\}$ each share exactly one element with $v = \{a,b\}$, meaning they are adjacent to $v$ in $Y(i)$.

In both cases, $u_3$ has at least two neighbors in $Y(i)$ that are adjacent to $v$. Since $F$ is a matching, at most one edge incident to $u_3$ belongs to $F$. Thus, there exists at least one fault-free edge from $u_3$ to a vertex $v_1 \in Y(i) \setminus \{v\}$ such that $v_1$ is adjacent to $v$. By Lemma \ref{FT_M_J42}, there exists a Hamiltonian path $P$ in $Y(i) - F$ joining $v_1$ and $v$. By adding the fault-free edge $u_3v_1$ and concatenating the path $\langle u, u_1, u_2, u_3 \rangle$ with $P$, we obtain a Hamiltonian path in $J(5,2) - F$ joining $u$ and $v$. 
\end{proof}

We are now ready to present the main result of this section. 

\begin{theorem}\label{FT_M_Johnson}
For $n\ge 5$ and $1\le k\le n-1$, $J(n,k)$ is matching-fault-tolerant Hamiltonian-connected.
\end{theorem}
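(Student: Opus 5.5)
Our plan is to argue by induction on $n$, in the same spirit as Theorem \ref{FT_HC_Johnson}, splitting $J(n,k)$ along an element $i$ into $X=X(i)\cong J(n-1,k-1)$ and $Y=Y(i)\cong J(n-1,k)$. Since the restriction of a matching to any subgraph is again a matching, the induction hypothesis applies directly to $X-F_X$ and $Y-F_Y$. This freedom is what makes the matching model easier than the general one.

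\textbf{Base cases.} For $k=1$ or $k=n-1$ we have $J(n,k)\cong K_n$ with $n\ge5$. Deleting a matching leaves minimum degree $n-2\ge3$ and removes at most $\lfloor n/2\rfloor$ edges. This is within the bounds of Lemma \ref{CHC_K_n} for $n\ge6$ with $n\ne8,10$; the cases $n=8,10$ are checked against the exceptional bounds $5$ and $9$. For $n=5$ one would instead argue directly: $K_5$ minus a matching of size at most $2$ is easily seen to be Hamiltonian-connected, or one can invoke Ore's condition with degrees $\ge3$ on $5$ vertices, which fails narrowly, so a short direct check is needed. The case $J(5,2)\cong J(5,3)$ is Lemma \ref{FT_M_J52}. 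For $n\ge6$ and $2\le k\le n-2$, both $X$ and $Y$ have parameters $(n-1,k')$ with $n-1\ge5$. They are either complete graphs, already handled, or covered by induction. We may also assume $n\ge2k$ by the isomorphism $J(n,k)\cong J(n,n-k)$, although this is not essential.

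\textbf{Inductive step.} Let $u,v$ be distinct vertices. We choose $i\in u\setminus v$, so that $u\in X$ and $v\in Y$; this avoids the same-side cases entirely. We then need a fault-free cross edge $ab$ with $a\in X\setminus\{u\}$ and $b\in Y\setminus\{v\}$. The induction hypothesis gives a Hamiltonian $u$--$a$ path in $X-F$ and a Hamiltonian $b$--$v$ path in $Y-F$, and we glue them along $ab$. The cross edges between $X$ and $Y$ form a bipartite graph in which each vertex of $X$ has $n-k\ge2$ neighbours in $Y$. To find $ab$, pick any $a\in X\setminus\{u\}$. It has at least $n-k$ neighbours in $Y$, at most one of which is $v$ and at most one of which is joined to $a$ by an edge of $F$, since $F$ is a matching. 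So if $n-k\ge3$, a suitable $b$ exists. Using the symmetry $k\le n/2$ with $n\ge6$ gives $n-k\ge3$. If instead we work without symmetry, we pick the side with larger cross degree. With $k\ge2$, each vertex of $Y$ has $k\ge2$ neighbours in $X$, and $|X|\ge5$ gives enough alternatives.

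\textbf{Main obstacle.} The only genuinely delicate point is the base of the induction, namely the small complete graphs ($K_5$ minus a matching) and making sure every $J(n-1,\cdot)$ arising in the recursion is covered. Cases with $n-1=4$ must not arise: $J(4,2)$ is only Hamiltonian-connected between adjacent vertices, by Lemma \ref{FT_M_J42}, which is why the statement assumes $n\ge5$. Since $n\ge6$ in the inductive step, the subgraphs have $n-1\ge5$, so this is fine. For $K_5$ minus a matching of size $2$, I would verify Hamiltonian-connectivity by a quick case analysis on whether $u,v$ are matched or unmatched vertices.
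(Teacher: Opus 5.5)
Your inductive step is the paper's: split along $i\in u\setminus v$, normalize to $k\le n/2$, note that any $a\in X\setminus\{u\}$ has $n-k\ge 3$ neighbours in $Y$, discard $v$ and the at most one $F$-edge at $a$, and glue the two Hamiltonian paths given by induction. The gap is in your base case for complete graphs. You claim a matching of $K_n$ stays within the bound of Lemma~\ref{CHC_K_n} for all $n\ge 6$ with $n\ne 8,10$. This is false at $n=6$: the bound there is $2n-10=2$, but a perfect matching of $K_6$ has $3$ edges. The case cannot be avoided. $J(6,1)\cong J(6,5)\cong K_6$ are instances of the theorem, and $K_6$ appears as $X(i)$ when you recurse from $J(7,2)$ (or from $J(7,5)$ after normalization). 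As written, $K_6$ minus a perfect matching (the octahedron) is left unjustified.

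The repair is the tool you set aside. Ore's condition does not fail at $n=5$; it holds with equality. In $K_n-M$, for a matching $M$, two distinct nonadjacent vertices are exactly the endpoints of an edge of $M$, and each has degree $n-2$. So $d(x)+d(y)=2n-4\ge n+1$ for every $n\ge 5$, and Ore's theorem gives Hamiltonian-connectivity of $K_n-M$ for all $n\ge 5$ at once. This needs no appeal to Lemma~\ref{CHC_K_n} and no separate checks. Note also that Lemma~\ref{CHC_K_n} already covers $n=5$, since its bound there is $2$ and a matching of $K_5$ has at most $2$ edges.

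The paper instead writes down explicit Hamiltonian paths in $K_n-M$ for the four types of terminal pair (Table~\ref{HC_M_K_n}). That is constructive, which the routing algorithm uses, but logically it plays the same role. With either fix, your argument is complete.
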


\begin{proof}
We prove the theorem by induction on $n$.

For $k=1$, since $J(n,1)\cong K_n$, let $M = \{a_1b_1, a_2b_2, \cdots, a_tb_t\}$ be a matching in $K_n$, and let $U = V(J(n,1)) \setminus \bigcup_{i=1}^t \{a_i, b_i\}=\{u_1, u_2, \dots, u_s\}$ be the set of vertices not incident to any edge of $M$. By symmetry, we enumerate Hamiltonian paths of essentially different terminals in Table \ref{HC_M_K_n}. Since $J(n,n-1)\cong J(n,1)$, the result also holds for $k=n-1$. For $n=5$ and $k\in\{2,3\}$, the result follows from Lemma~\ref{FT_M_J52}.

Now assume that $n\ge6$ and $2\le k\le n-2$, and suppose that
$J(m,k')$ is matching-fault-tolerant Hamiltonian-connected for every
$5\le m<n$ and $1\le k'\le m-1$.

Since $J(n,k)\cong J(n,n-k)$, we may assume that $k\le n/2$. Let $F$ be any matching of $J(n,k)$, and let $u,v$ be any two distinct vertices of $J(n,k)-F$. Since $u\ne v$, there exists an element $i\in u\setminus v$. Let $X=X(i)$ and $Y=Y(i)$. Then $X\cong J(n-1,k-1), Y\cong J(n-1,k).$ Choose any vertex $a\in X\setminus\{u\}$. Since $a$ contains $i$, it has exactly $n-k$ neighbors in $Y$. Hence
\[
|N_Y(a)\setminus\{v\}|\ge n-k-1
\ge \frac n2-1\ge2.
\]
Since $F$ is a matching, at most one edge incident with $a$ belongs to $F$. Therefore, among the edges joining $a$ to $N_Y(a)\setminus\{v\}$, there exists a non-faulty edge $aa'$ for some $a'\in Y\setminus\{v\}$.

The restriction $F_X$ of $F$ to $E(X)$ is a matching, and similarly $F_Y$ is a matching. By the induction hypothesis, there exists a Hamiltonian path $\langle u,P_1,a\rangle$ in $X-F_X$, and a Hamiltonian path $\langle a',P_2,v\rangle$ in $Y-F_Y$. Since $aa'\notin F$, concatenating these paths with the edge $aa'$ gives $\langle u,P_1,a,a',P_2,v\rangle$, which is a Hamiltonian $u$-$v$ path of $J(n,k)-F$. Thus $J(n,k)$ is matching-fault-tolerant Hamiltonian-connected.
\end{proof}

\begin{table}[htbp]
    \centering
    \renewcommand{\arraystretch}{1.2}
    \begin{tabular}{c|c|l}
    \hline
     &  Terminal pair $\{u, v\}$    &  Hamiltonian path in $J(n,1)-M$ \\
      \hline
      1  & $\{a_1,b_1\}$  & $\langle a_1,\dots,a_t,u_1,\dots,u_s,b_t,\dots,b_1\rangle $\\
      2  & $\{a_1,a_2\}$ & $\langle a_1,a_3,\dots,a_t,u_1,\dots,u_s,b_t,\dots,b_1,a_2 \rangle$ \\
      3  & $\{a_1,u_1\}$ & $\langle a_1,\dots,a_t,u_2,\dots,u_s,b_t,\dots,b_1,u_1 \rangle$ \\
      4  & $\{u_1,u_2\}$ & $\langle u_1,a_1,\dots,a_t,u_3,\dots,u_s,b_t,\dots,b_1,u_2 \rangle$\\
       \hline
    \end{tabular}
    \caption{Hamiltonian paths for all representative terminal pairs in $K_n$ under a matching fault.}
    \label{HC_M_K_n}
\end{table}

As an immediate consequence, the result also applies to the maximum-cardinality case in which the faulty set is a perfect matching.

\begin{corollary}
Let $F$ be a perfect matching of $J(n,k)$, where $n\ge5$ and $1\le k\le n-1$. Then $J(n,k)-F$ is Hamiltonian-connected.
\end{corollary}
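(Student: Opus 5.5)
This corollary should follow directly from Theorem \ref{FT_M_Johnson}, so the plan is a short specialization argument rather than a new construction. Let $F$ be a perfect matching of $J(n,k)$ with $n\ge5$ and $1\le k\le n-1$. By definition, a perfect matching is a set of pairwise disjoint edges covering every vertex, so in particular $F$ is a matching of $J(n,k)$.

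Theorem \ref{FT_M_Johnson} asserts that for every matching $F$ of $J(n,k)$ in this parameter range, and every pair of distinct vertices $u,v$, the graph $J(n,k)-F$ contains a Hamiltonian $u$-$v$ path. Note that deleting edges does not remove vertices, so $V(J(n,k)-F)=V(J(n,k))$. Hence every pair of distinct vertices of $J(n,k)-F$ is joined by a Hamiltonian path in $J(n,k)-F$, which is exactly Hamiltonian connectivity.

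Two points deserve a remark, though neither is a real obstacle. First, the inductive proof of Theorem \ref{FT_M_Johnson} never assumed any vertex was left uncovered by $F$. The base case $k=1$ via Table \ref{HC_M_K_n} allows $U=\emptyset$ (that is, $s=0$) for terminal types 1 and 2, which are the only relevant types when $F$ is perfect. The inductive step uses only that the restrictions $F_X$ and $F_Y$ are matchings, and that the chosen vertex $a$ is incident with at most one edge of $F$; both hold for perfect matchings.

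Second, a perfect matching exists only when $\binom{n}{k}$ is even. The statement is conditional on $F$ being a perfect matching, so no existence claim is needed. I would simply add the observation that in this case $|F|=\frac12\binom{n}{k}$, matching the count promised in the introduction.
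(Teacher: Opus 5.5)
Your reduction is correct and is the paper's own argument. The paper records the corollary as an immediate consequence of Theorem~\ref{FT_M_Johnson}: a perfect matching is in particular a matching, and deleting edges removes no vertices.

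What is wrong is your first remark, which you present as a check that the theorem's proof covers this case. Take $s=0$, i.e.\ $M$ is a perfect matching of $K_n$ with $n=2t\ge 6$. Row~1 of Table~\ref{HC_M_K_n} becomes $\langle a_1,\dots,a_t,b_t,\dots,b_1\rangle$. Row~2 becomes $\langle a_1,a_3,\dots,a_t,b_t,\dots,b_1,a_2\rangle$. Both traverse the faulty edge $a_tb_t\in M$. Row~3 fails the same way when $s=1$, and row~4 when $s=2$. So the table fails in exactly the case the corollary needs when $k\in\{1,n-1\}$. The defect also sits inside the theorem you invoke: for $k\ge 2$ the induction reduces to instances $J(m,1)$ where the restricted matching may leave $s\in\{0,1,2\}$ vertices unmatched.

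The conclusion is still true, and the repair is short. In $K_n-M$, for any matching $M$, two nonadjacent vertices $x,y$ must satisfy $xy\in M$. Hence $d(x)+d(y)=2(n-2)\ge n+1$ for $n\ge 5$. The Ore-type condition for Hamiltonian connectivity (the one the paper itself uses for \texttt{CompleteGraphRouting}) then shows that $K_n-M$ is Hamiltonian-connected. Once the $k=1$ base case is justified this way instead of by the table, your specialization goes through. Your comments on the inductive step are accurate: it needs only that $F_X$ and $F_Y$ are matchings and that $a$ meets at most one edge of $F$.
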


\section{Vertex-fault-tolerant Hamiltonian connectivity of $J(n,k)$}

In this section, we investigate the Hamiltonian connectivity of Johnson graphs under vertex faults. Unlike an edge fault, which removes a single communication link, a vertex fault removes a
vertex together with all of its incident edges and may therefore cause more extensive structural disruption.

For example, an exhaustive verification shows that
the degree-based edge-fault bound cannot in general be carried over to vertex faults. In $J(6,3)$, deleting the six vertices
\[
\{1,2,3\},\ \{1,3,4\},\ \{1,4,6\},\ \{3,4,5\},\
\{3,5,6\},\ \{4,5,6\}
\]
yields a subgraph containing no Hamiltonian path from $\{2,3,5\}$ to $\{2,4,5\}$. We therefore seek a vertex-fault tolerance guarantee for Johnson graphs in this section.

The following lemma characterizes the intersection of the cross-layer neighborhoods of two vertices in $X$.

\begin{lemma}\label{Common_neighbor}
For any two distinct vertices $a,b\in X$, we have $|N_Y(a)|=|N_Y(b)|=n-k$ and 
\[ |N_Y(a)\cap N_Y(b)|=
    \begin{cases}
        1,& \text{if }a\text{ is adjacent to } b;\\
        0, & \text{otherwise}.
    \end{cases}
\]
\end{lemma}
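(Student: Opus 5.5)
The plan is to describe the cross edges explicitly and then count. Work with $X=X(i)$ and $Y=Y(i)$ for the fixed element $i$. A vertex $a\in X$ is a $k$-set containing $i$. A neighbor of $a$ in $Y$ is a $k$-set $y$ with $i\notin y$ and $|a\cap y|=k-1$. Since $a\cap y$ cannot contain $i$, we must have $a\cap y=a\setminus\{i\}$. Hence $y=(a\setminus\{i\})\cup\{t\}$ for some $t\notin a$; note $t\ne i$ automatically because $i\in a$. Conversely, every such set lies in $Y$ and is adjacent to $a$. So the map $t\mapsto (a\setminus\{i\})\cup\{t\}$ is a bijection from $[n]\setminus a$ onto $N_Y(a)$, which gives $|N_Y(a)|=n-k$. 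The same argument applies to $b$.

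For the intersection, write $a'=a\setminus\{i\}$ and $b'=b\setminus\{i\}$. These are distinct $(k-1)$-subsets of $[n]\setminus\{i\}$, and $|a\cap b|=|a'\cap b'|+1$.

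Suppose $y\in N_Y(a)\cap N_Y(b)$. Then $y\supseteq a'\cup b'$ and $|y|=k$. Since $a'\neq b'$, we get $|a'\cup b'|\ge k$. It follows that $y=a'\cup b'$ and $|a'\cap b'|=k-2$, that is, $|a\cap b|=k-1$, which says $a$ and $b$ are adjacent.

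Conversely, assume $a$ and $b$ are adjacent. Then $y=a'\cup b'$ is a $k$-set avoiding $i$, and it has the required form $a'\cup\{t\}$ with $t\in b'\setminus a'\subseteq [n]\setminus a$, and symmetrically for $b$. So $y$ is a common neighbor, and by the first part it is the only one.

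Thus the intersection has size $1$ when $a\sim b$ and $0$ otherwise. No step is a real obstacle; the only care needed is to verify $t\neq i$ and that $y$ avoids $i$, which both follow from $i\in a$ and $i\notin a'\cup b'$.
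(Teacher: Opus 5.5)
Your proof is correct and follows essentially the same route as the paper: both describe $N_Y(a)$ as the sets $(a\setminus\{i\})\cup\{t\}$ with $t\notin a$, and both show that a common cross neighbor exists exactly when $a$ and $b$ differ in one element, in which case it is $(a\setminus\{i\})\cup(b\setminus\{i\})$. The paper splits into two cases, while you use a single union-size argument ($a'\cup b'\subseteq y$ forces $|a'\cup b'|=k$); this has the small advantage of covering non-adjacent pairs at any distance, whereas the paper's Case~2 is written only for pairs sharing $k-2$ elements, though its argument extends.
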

\begin{proof}
Since $a$ and $b$ are vertices in $X$, they both contain the element $i$. By definition, replacing the element $i$ in $a$ (or $b$) with any element not currently in the set yields a neighbor in $Y$. Thus, $|N_Y(a)|=|N_Y(b)|=n-k$. We now determine the size of their common neighborhood based on their adjacency in $X$.

\noindent\textbf{Case 1.} $a$ and $b$ are adjacent. Without loss of generality, let $a=\{s_1,i,t_3,\cdots,t_k\}$ and $b=\{s_2,i,t_3,\cdots,t_k\}$, where $s_1\neq s_2$. To find a common neighbor in $Y$, we must replace the element $i$ in both sets such that the resulting sets are identical. The only way to achieve this is to replace $i$ in $a$ with $s_2$, and replace $i$ in $b$ with $s_1$. Thus, their unique common neighbor in $Y$ is exactly $\{s_1, s_2, t_3, \dots, t_k\}$. Hence, $|N_Y(a) \cap N_Y(b)| = 1$.

\noindent \textbf{Case 2.} $a$ and $b$ are non-adjacent. Without loss of generality, let $a=\{s_1, p_1, i, \dots, t_k\}$ and $b=\{s_2, p_2, i, \dots, t_k\}$, where $\{s_1, p_1\} \cap \{s_2, p_2\} = \emptyset$. For any vertex $v \in N_Y(a)$, $v$ must not contain $i$ (since $v \in Y$), which means $i$ must be replaced by some element  not in $a$. Consequently, $v$ will still contain both $s_1$ and $p_1$. However, $b$ lacks both $s_1$ and $p_1$. A neighbor of $b$ in $Y$ can acquire at most one of them (by replacing $i$). Therefore, no vertex in $N_Y(b)$ can contain both $s_1$ and $p_1$, which implies $v \notin N_Y(b)$. Thus, $N_Y(a) \cap N_Y(b) = \emptyset$.
\end{proof}

For the small base cases, we use exhaustive computational verification, which yields the following lemma.

\begin{lemma}\label{FT_V_J52}
$J(5,2)$ and $J(5,3)$ are $3$-vertex-fault-tolerant Hamiltonian-connected.
\end{lemma}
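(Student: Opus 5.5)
By the isomorphism $J(5,2)\cong J(5,3)$ (complementation of $2$-subsets in $[5]$ maps adjacency to adjacency), it suffices to prove the statement for $J(5,2)$. This graph has $10$ vertices, is $6$-regular, and is the complement of the Petersen graph. So for every vertex-fault set $F$ with $|F|\le 3$ and every pair of distinct vertices $u,v\notin F$, I must exhibit a Hamiltonian $u$-$v$ path in $J(5,2)-F$. Vertex-fault tolerance is not monotone in $|F|$: the larger graph obtained with fewer faults has more vertices to cover. Hence each size $|F|\in\{0,1,2,3\}$ has to be handled separately. The case $|F|=0$ is Lemma \ref{HC_Johnson}.

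For small fault sets I would first dispose of as much as possible by hand using Ore's condition for Hamiltonian connectivity: a graph on $N$ vertices with $d(x)+d(y)\ge N+1$ for all nonadjacent $x,y$ is Hamiltonian-connected. For $|F|=1$ we have $N=9$ and every degree is at least $5$, so $d(x)+d(y)\ge 10=N+1$, and this case is done.

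For $|F|=2$ we have $N=8$ and need degree sum $9$. A nonadjacent pair $x,y$ in $J(5,2)$ is a disjoint pair of $2$-sets. Its common neighbours are the $2$-sets meeting both $x$ and $y$, and there are exactly $4$ of these. If both faults lie among them, Ore's condition fails with sum exactly $8$. That residual configuration (two faults, both common neighbours of a disjoint pair) I would treat by a direct construction: decompose along an element $i$, as in the proof of Lemma \ref{FT_M_J52}, into $X(i)\cong K_4$ and $Y(i)\cong J(4,2)$. Then route through $X(i)-F$, which is a small complete graph, and through $Y(i)-F$, using explicit paths in the octahedron.

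For $|F|=3$ we have $N=7$, and Ore's condition is far from guaranteed. Here I would rely on an exhaustive verification, as the paper does for Lemmas \ref{FT_HC_J42} and \ref{FT_HC_J52}, organized to be small. The automorphism group of $J(5,2)$ is $S_5$ acting on $[5]$. I would enumerate the orbits of $3$-vertex subsets $F$ under $S_5$; as triples of edges of $K_5$ these are only a handful of isomorphism types (triangle, path, star, path plus disjoint edge, etc.). For each representative $F$, I would enumerate the orbits of endpoint pairs $\{u,v\}$ under the stabilizer of $F$, and for each such pair find a Hamiltonian $u$-$v$ path on the $7$ remaining vertices by backtracking search, recording it in a table. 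The main obstacle is this $|F|=3$ case: hand arguments via Ore fail, and care is needed to ensure that the orbit reduction is complete, so that no fault type or endpoint type is missed. The fallback is the same brute-force program applied to all $\binom{10}{3}\cdot\binom{7}{2}$ instances (and analogously for $|F|\le 2$) without symmetry reduction, which is trivially feasible at this size.
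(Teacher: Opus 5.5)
Your proposal is correct and, like the paper, ultimately rests on exhaustive computer verification: the paper gives no hand argument and only reports an exhaustive search. Your Ore argument for $|F|\le 1$ and the orbit reduction for $|F|=3$ are valid ways to shrink that search. One correction: for $|F|=2$ Ore's condition never holds, because any two faults are common neighbours of some non-faulty disjoint pair (faults $\{a,c\},\{a,d\}$ or $\{a,c\},\{b,d\}$ are both adjacent to $\{a,b\}$ and $\{c,d\}$), so your ``residual configuration'' is every case. The whole of $|F|=2$ must therefore be carried either by the direct construction or by the brute-force fallback. The direct construction does go through with a single crossing edge for $i\in u\setminus v$ in each distribution $(|F_X|,|F_Y|)\in\{(2,0),(1,1),(0,2)\}$.
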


With the foundational base cases established, we now present the main theorem regarding the vertex-fault-tolerant Hamiltonian connectivity of $J(n,k)$.

\begin{theorem}\label{FT_V_Johnson}
$J(n,k)$ is $(n-2)$-vertex-fault-tolerant Hamiltonian-connected whenever $n\ge 5$ and $1\le k \le n-1$.
\end{theorem}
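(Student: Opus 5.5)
We argue by induction on $n$, following the decomposition scheme of Theorem \ref{FT_HC_Johnson}.

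\textbf{Base cases.} For $k=1$ (and hence $k=n-1$), $J(n,1)\cong K_n$. Deleting at most $n-2$ vertices leaves a complete graph, which is Hamiltonian-connected whenever it has at least two vertices. For $n=5$ and $k\in\{2,3\}$ we invoke Lemma \ref{FT_V_J52}.

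\textbf{Inductive step: choosing the split.} Let $n\ge6$ and $2\le k\le n/2$; the case $k>n/2$ follows from $J(n,k)\cong J(n,n-k)$. Let $F_v$ be a set of at most $n-2$ faulty vertices and $u,v$ two fault-free vertices. We want an element $i$ with three properties:
\begin{itemize}
\item $|F_v\cap X(i)|\le n-3$ and $|F_v\cap Y(i)|\le n-3$, so the induction hypothesis for $X\cong J(n-1,k-1)$ and $Y\cong J(n-1,k)$ applies;
\item $X(i)$ and $Y(i)$ each contain at least one fault-free vertex;
\item ideally, $i$ separates $u$ from $v$.
\end{itemize}
This is a counting argument. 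Each vertex lies in exactly $k$ of the sets $X(i)$ and in $n-k\ge k$ of the sets $Y(i)$. If every $i\in u\triangle v$ had all $n-2$ faults in $X(i)$ or all in $Y(i)$, the faults would be concentrated on very few elements. We would then switch to an $i$ outside $u\triangle v$ that still splits $F_v$ into two nonempty parts, and handle $u,v$ on the same side.

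The degenerate case $k=2$, where $X\cong K_{n-1}$, needs no hypothesis: removing vertices from a complete graph keeps it Hamiltonian-connected. For $Y$ with $k=2$ and $n-1=5$, we use Lemma \ref{FT_V_J52}.

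\textbf{Constructing the path.} If $u\in X$ and $v\in Y$, pick a fault-free $a\in X\setminus\{u\}$ and a fault-free neighbour $a'\in N_Y(a)\setminus(\{v\}\cup F_v)$. Join $u$ to $a$ in $X-F_v$ and $a'$ to $v$ in $Y-F_v$ by induction, and add the edge $aa'$. To find $a'$, vary $a$ over $X$. By Lemma \ref{Common_neighbor}, the cross neighbourhoods $N_Y(a)$ overlap in at most one vertex, so the sets $N_Y(a)$ for a few mutually non-adjacent $a$ are disjoint. Since $|F_v\cap Y|+1\le n-1$ is smaller than the total size of these sets, some $a$ has a usable neighbour.

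If $u,v$ lie on the same side, say $X$, take a Hamiltonian $u$--$v$ path in $X-F_v$ and look for a consecutive pair $x_l,x_{l+1}$ with distinct fault-free neighbours $y',y''\in Y$. Then splice in a Hamiltonian $y'$--$y''$ path of $Y-F_v$. Lemma \ref{Common_neighbor} gives the counting here: each $x$ has $n-k$ cross neighbours, and the path has $\binom{n-1}{k-1}$ vertices, far more than the $n-2$ faults can block. The case $u,v\in Y$ is symmetric, using the $k$ cross neighbours of each $y$.

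\textbf{Main obstacle.} The delicate part is the case where all faults lie on one side, for example $F_v\subseteq Y$ with $|F_v|=n-2$. The induction hypothesis then fails for that side, since it tolerates only $n-3$ faults. The first remedy is to change $i$: all $n-2$ faults lying in $Y(i)$ for every admissible $i$ is a strong constraint that counting should rule out for $n\ge6$. If it persists, we instead absorb one faulty vertex's worth of slack differently. Treat one fault $w\in Y$ as "removed" by routing through the other side, or apply the hypothesis to $Y$ with $n-3$ faults and reroute around the remaining fault using the edge-fault result Theorem \ref{FT_HC_Johnson}. I expect most of the technical work to go into this fault-concentration case.
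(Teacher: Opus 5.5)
There is a genuine gap. The case you flag as the main obstacle, where all $n-2$ faults lie on one side of every separating element, is never actually closed. Your first remedy is partly fine. If $|F_v|\ge2$, any element lying in some but not all faulty $k$-sets splits $F_v$ nontrivially, so no counting is needed to find such an $i$. But such an $i$ need not lie in $u\triangle v$. Take $u=\{1,2\}$, $v=\{1,3\}$ with all faults avoiding $2$ and $3$. Then you are forced to put $u,v$ on the same side, and everything rests on the splicing count, which you only assert.

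That count is wrong as stated. For $k=2$ and $u,v\in Y(i)$, each $y$ has only two cross-neighbours, and one faulty vertex of $X(i)\cong K_{n-1}$ kills the cross edges at $n-2$ vertices of $Y$. Concretely, let $F_v\cap X(i)=X(i)\setminus\{\{i,p\},\{i,q\}\}$ and $F_v\cap Y(i)=\{\{p,q\}\}$. Both sides then have at most $n-3$ faults, and a splice needs consecutive vertices $\{p,a\},\{q,a\}$ on the $Y$-path. An arbitrary Hamiltonian $u$--$v$ path returned by induction need not contain such a pair. For example, with $n=6$, $i=6$, $p=1$, $q=2$, the path $13,14,15,35,34,45,25,24,23$ has none. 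So you would still have to show that such configurations never occur in your fallback, for every $k$, and the proposal does not do this.

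Your other remedy, via Theorem~\ref{FT_HC_Johnson}, does not apply. That theorem is about deleted edges and cannot produce a path of $Y$ that avoids a faulty vertex.

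The missing idea, which the paper uses, is to make the extra fault an endpoint and then cut it off. Your phrase ``treat one fault as removed'' gestures at this but never makes it precise. The argument runs as follows.
\begin{itemize}
\item Always take $i\in u\setminus v$.
\item If $F_X$ and $F_Y$ are both nonempty, each has at most $n-3$ faults, and your cross-edge argument finishes the case.
\item If $F_X=\emptyset$, pick $f\in F_Y$. Apply the induction hypothesis to $Y$ with the at most $n-3$ faults $F_Y\setminus\{f\}$ and endpoints $f,v$. Deleting $f$ leaves a Hamiltonian path of $Y-F_Y$ that starts at $f$'s path-neighbour $f'$.
\item $f'$ has $k\ge2$ neighbours in $X$, so choose $f''\in N_X(f')\setminus\{u\}$. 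Join $u$ to $f''$ by a Hamiltonian path of the fault-free $X$, and add the edge $f''f'$.
\item The case $F_Y=\emptyset$ is symmetric: truncate a Hamiltonian $u$--$f$ path of $X-(F_X\setminus\{f\})$.
\end{itemize}
This keeps $u$ and $v$ separated throughout and removes the need for any same-side splicing count.
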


\begin{proof} 
We proceed by induction on $n$. Since $J(n,1) \cong J(n,n-1) \cong K_n$, the complete graph is trivially $(n-2)$-vertex-fault-tolerant Hamiltonian-connected. For the base cases $n=5$, $J(5,2)$ and $J(5,3)$ are $3$-vertex-fault-tolerant Hamiltonian-connected by Lemma \ref{FT_V_J52}. Since $J(n,k)\cong J(n,n-k)$, we may assume that $k\le n/2$.
 
For the inductive step, assume that $n\ge 6$ and $2\le k \le n-2$. The strong induction hypothesis asserts that $J(m,k')$ is $(m-2)$-vertex-fault-tolerant Hamiltonian-connected for any $5 \le m < n$ and $1 \le k' \le m-1$. Let $F$ be a set of faulty vertices in $J(n,k)$ with $|F| \le n-2$.

Consider two distinct arbitrary vertices $u$ and $v$ in $J(n,k)-F$. Since $u \ne v$, there exists an element $i\in [n]$ such that $i\in u$ and $i\notin v$. We partition $J(n,k)$ along $i$, yielding $X(i) \cong J(n-1,k-1)$ and $Y(i) \cong J(n-1,k)$, with $u \in X(i)$ and $v \in Y(i)$. Let $F_X$ and $F_Y$ denote the restriction of $F$ on $X(i)$ and $Y(i)$, respectively. We consider two subcases based on the fault distribution.

\noindent\textbf{Case 1.} $1\le|F_X|\le n-3$ and $1\le|F_Y|\le n-3$. 
We first select candidate endpoints in $X(i)$ to cross into $Y(i)$. The number of available fault-free vertices in $X(i)$ other than $u$ is $|X(i)| - |F_X| - 1$. 
If we can choose two distinct candidates $a, b \in X(i) \setminus (F_X \cup \{u\})$, then by Lemma 13, the cardinality of the neighborhood of $\{a,b\}$ in $Y(i)$ satisfies:
\begin{align*}
    |N_Y(\{a,b\})| &\ge 2(n-k)-1 \\
    &= (n-k) + (n-k) - 1 \\
    &\ge (n-k) + k - 1 \quad (\text{since } n \ge 2k) \\
    &= n-1.
\end{align*}
Since $|F_Y \cup \{v\}| \le |F_Y| + 1 \le (n-3) + 1 = n-2$, we have $|N_Y(\{a,b\})| \ge n-1 > |F_Y \cup \{v\}|$. Thus, there exists at least one fault-free cross edge to $Y(i)$.
Note that we can always find two such candidates $a,b$ unless $k=2$ and $|F_X| = n-3$. In this extreme scenario, only one candidate $a$ remains in $X(i)$, but consequently $|F_Y| \le (n-2) - (n-3) = 1$. The single candidate $a$ has $|N_Y(a)| = n-2 \ge 4$ neighbors in $Y(i)$, which is greater than $|F_Y \cup \{v\}| \le 2$, ensuring a fault-free cross edge.

Without loss of generality, let $c \in Y(i) \setminus (F_Y \cup \{v\})$ be a fault-free neighbor of $b$. By the induction hypothesis, there exists a Hamiltonian path $\langle u, P_1, b \rangle$ in $X(i)-F_X$, and a Hamiltonian path $\langle c, P_2, v \rangle$ in $Y(i)-F_Y$. Concatenating them via the edge $bc$ yields a Hamiltonian path $\langle u, P_1, b, c, P_2, v \rangle$ in $J(n,k)-F$.

\noindent\textbf{Case 2.} $|F_X|=0$ or $n-2$. We only consider $F_X = \emptyset$ as the case $F_Y = \emptyset$ can be handled symmetrically. Let $f$ be a faulty vertex in $F_Y$. Then, the remaining faults satisfy $|F_Y \setminus \{f\}| \le n-3$. 

By the induction hypothesis, there exists a Hamiltonian path from $f$ to $v$ in $Y(i) - (F_Y \setminus \{f\})$, denoted by $\langle f, f', Q_2, v \rangle$, where $f'$ is the immediate neighbor of $f$. Since $f$ is faulty, we truncate it to obtain a fault-free path segment $\langle f', Q_2, v \rangle$ in $Y(i)-F_Y$. 
Because $k \ge 2$, the vertex $f'$ has $k \ge 2$ neighbors in $X(i)$. Thus, we can choose a neighbor $f'' \in N_{X(i)}(f')$ such that $f'' \ne u$. By the induction hypothesis on $X(i)$, there exists a Hamiltonian path $\langle u, Q_1, f'' \rangle$ in $X(i)$. Connecting the two segments via the cross edge $f''f'$, the path $\langle u, Q_1, f'', f', Q_2, v \rangle$ is a fault-free Hamiltonian path joining $u$ and $v$ in $J(n,k)-F$. 
\end{proof}

\section{Fault-Tolerant Algorithm and Complexity Analysis}

In this section, we establish efficient algorithms based on the constructive proofs presented previously. The constructive logic for handling general edge faults, matching faults, and vertex faults shares a fundamental divide-and-conquer architecture based on the subgraph partitioning of $J(n,k)$. We first present the framework for general edge faults, which incorporates a dynamic partition selection subroutine. Subsequently, we introduce a unified routing algorithm for matching and vertex faults. Finally, we rigorously analyze the complexity of the proposed algorithms.

\subsection{Routing under General Edge Faults}
Throughout this section, every recursive instance is normalized using the complement isomorphism
\[
J(n,k)\cong J(n,n-k).
\]
Specifically, whenever $k>n/2$, each vertex is replaced by its complement in $[n]$, the parameter $k$ is replaced by $n-k$, and the terminals and fault set are transformed accordingly. The
routing algorithm is then applied to the normalized instance, and the resulting path is mapped back by the inverse complement transformation. Hence, every nontrivial recursive instance may be assumed to satisfy $k\le n/2$.

When the recursion reaches $k=1$ (or, equivalently, $k=n-1$), the problem reduces to
Hamiltonian routing in a faulty complete graph. Let
$G=K_n-F$, where $|F|\le n-4$. For every pair of nonadjacent
vertices $x,y$ in $G$, we have $xy\in F$ and
\[
d_F(x)+d_F(y)\le |F|+1\le n-3.
\]
Consequently,
\[
d_G(x)+d_G(y)
=2(n-1)-d_F(x)-d_F(y)
\ge n+1.
\]
Thus, $G$ satisfies the Ore-type sufficient condition for
Hamiltonian connectivity.

We use \texttt{CompleteGraphRouting} to denote the standard constructive procedure associated with this Ore-type condition. The faulty edges are processed successively. Whenever a faulty
edge occurs on the current Hamiltonian path, the usual Ore-type path-rearrangement operation is applied to eliminate it while preserving the prescribed endpoints. Since at most $n-4$ faulty
edges are processed and each path rearrangement requires at most a linear scan of the current path, a single invocation takes $O(n^2)$ time.

Since $J(4,2)$ and $J(5,2)$ have small orders, all admissible fault configurations and terminal pairs can be precomputed exhaustively. We therefore treat \texttt{SmallGraphRouting} as a
constant-time lookup routine.

Since $J(6,3)$ has constant size, all exceptional configurations covered by Lemma~\ref{FT_HC_J63_Special} can be handled by a
finite lookup table after canonical relabeling. We denote this constant-time procedure by \texttt{SpecialJ63Routing} in Algorithm \ref{alg:routing_general}. This approach guarantees that the base-case anomalies are resolved in $O(1)$ time.

\begin{algorithm}[!htbp]
\caption{Hamiltonian Routing in $J(n,k)$ with General Edge Faults}
\label{alg:routing_general}

\SetKwProg{Fn}{Function}{:}{}
\SetKwFunction{FMainEdge}{RoutingGenEdge}
\SetKwFunction{FSelect}{PartitionSelection}

\KwIn{Graph parameters $n, k$; Terminals $u, v$; Fault set $F$ with $|F| \le k(n-k)-3$}
\KwOut{A fault-free Hamiltonian path $P$ joining $u$ and $v$}
\BlankLine

\Fn{\FMainEdge{$n, k, u, v, F$}}{
   \If{$k > n/2$}{
        Transform $(u,v,F)$ to the complementary instance
        $(\bar u,\bar v,\bar F)$ of $J(n,n-k)$\;
        $\bar P \leftarrow$
        \FMainEdge{$n,n-k,\bar u,\bar v,\bar F$}\;
        \Return the inverse-complement image of $\bar P$\;
    }
    \lIf{$k = 1$}{
        \Return CompleteGraphRouting($n, u, v, F$)
    }
    \If{$(n,k)=(4,2)$ \textbf{or} $(n,k)=(5,2)$}{
        \Return SmallGraphRouting($n,k,u,v,F$)\;
    }
    \If{The exceptional configuration of $J(6,3)$ in Lemma~\ref{FT_HC_J63_Special} occurs}{
        \Return SpecialJ63Routing($u, v, F$)\;
    }

    $i \leftarrow$ \FSelect{$n, k, F$}\;
    Partition into $X(i) \cong J(n-1, k-1)$ and $Y(i) \cong J(n-1, k)$\;
    
    \uIf{$u \in X(i)$ \textbf{and} $v \in Y(i)$}{
        Select $a \in X(i)$ and $a' \in Y(i)$ such that $aa' \notin F$\;
        $P_1 \leftarrow $ \FMainEdge{$n-1, k-1, u, a, F_X$}\;
        $P_2 \leftarrow $ \FMainEdge{$n-1, k, a', v, F_Y$}\;
        \Return $P_1 \circ aa' \circ P_2$\;
    }
    \uElseIf{$u \in Y(i)$ \textbf{and} $v \in X(i)$}{
        Select $a \in Y(i)$ and $a' \in X(i)$ such that $aa' \notin F$\;
        $P_1 \leftarrow $ \FMainEdge{$n-1, k, u, a, F_Y$}\;
        $P_2 \leftarrow $ \FMainEdge{$n-1, k-1, a', v, F_X$}\;
        \Return $P_1 \circ aa' \circ P_2$\;
    }
    \uElseIf{$u \in X(i)$ \textbf{and} $v \in X(i)$}{
        $P_X \leftarrow $ \FMainEdge{$n-1, k-1, u, v, F_X$}\;
        Scan the edges of $P_X$ until an edge $x_1x_2$ admitting two distinct fault-free cross-edges $x_1y_1$ and $x_2y_2$ is found\;
        Split $P_X$ at $(x_1, x_2)$ into $P_X^1$ (from $u$ to $x_1$) and $P_X^2$ (from $x_2$ to $v$)\;
        $P_Y \leftarrow $ \FMainEdge{$n-1, k, y_1, y_2, F_Y$}\;
        \Return $P_X^1 \circ x_1 y_1 \circ P_Y \circ y_2 x_2 \circ P_X^2$\;
    }
    \ElseIf{$u \in Y(i)$ \textbf{and} $v \in Y(i)$}{
        $P_Y \leftarrow $ \FMainEdge{$n-1, k, u, v, F_Y$}\;
        Select an edge $(y_1, y_2) \in E(P_Y)$ and two cross-edges $y_1 x_1, y_2 x_2 \notin F$\;
        Split $P_Y$ at $(y_1, y_2)$ into $P_Y^1$ (from $u$ to $y_1$) and $P_Y^2$ (from $y_2$ to $v$)\;
        $P_X \leftarrow $ \FMainEdge{$n-1, k-1, x_1, x_2, F_X$}\;
        \Return $P_Y^1 \circ y_1 x_1 \circ P_X \circ x_2 y_2 \circ P_Y^2$\;
    }
}
\end{algorithm}

\begin{algorithm}[!t]
\caption{PartitionSelection for General Edge Faults}
\label{alg:partition_selection}

\SetKwProg{Fn}{Function}{:}{}
\SetKwFunction{FSelect}{PartitionSelection}

\KwIn{Parameters $n, k$; Faulty-edge set $F$}
\KwOut{An element $i \in [n]$ satisfying fault bounds for $X(i)$ and $Y(i)$}
\BlankLine

\Fn{\FSelect{$n, k, F$}}{
    \ForEach{$i \in [n]$}{
        $C_X[i] \leftarrow 0$\;
        $C_Y[i] \leftarrow 0$\;
    }
    
    \ForEach{edge $e = (x, y) \in F$}{
        
        \ForEach{$i \in x \cap y$}{
            $C_X[i] \leftarrow C_X[i] + 1$\;
        }
        
        \ForEach{$i \in [n] \setminus (x \cup y)$}{
            $C_Y[i] \leftarrow C_Y[i] + 1$\;
        }
    }
    
    \tcp{Select the valid partition element}
    \ForEach{$i \in [n]$}{
        \If{$C_X[i] \le (n-k)(k-1)-3$ \textbf{and} $C_Y[i] \le (n-k-1)k-3$}{
            \Return $i$\;
        }
    }
    \Return Error \tcp*{Unreachable by Theorem~\ref{FT_HC_Johnson}}
}
\end{algorithm}

The structural correctness and termination of Algorithm \ref{alg:routing_general} follow directly from the constructive proof of Theorem \ref{FT_HC_Johnson}. We now analyze its complexity. Let $N = \binom{n}{k}$ denote the total number of vertices in $J(n,k)$. 

\begin{theorem}
For $4\le k\le n-4$, Algorithm~\ref{alg:routing_general}
constructs a fault-free Hamiltonian path in asymptotically optimal
$\Theta(N)$ time, where $N=\binom{n}{k}$.
\end{theorem}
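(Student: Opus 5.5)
The lower bound $\Omega(N)$ is immediate: the output is a Hamiltonian path listing all $N=\binom{n}{k}$ vertices. For the upper bound, we will set up a recurrence for the running time $T(n,k)$ of \texttt{RoutingGenEdge} on a normalized instance ($k\le n/2$). The recurrence has two recursive calls, on $J(n-1,k-1)$ and $J(n-1,k)$, plus a local overhead $W(n,k)$. We then show that the overhead summed over the recursion tree is $O(N)$.

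The first step is to bound the local work $W(n,k)$. \texttt{PartitionSelection} scans each faulty edge and updates $O(n)$ counters, so it costs $O(n|F|)=O(n\cdot k(n-k))=O(n^3)$. Restricting $F$ to $F_X$ and $F_Y$ costs $O(|F|k)$.

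Selecting a fault-free cross edge needs some care in Case 2.3, where $u\in X$ and $v\in Y$. We only need to examine about $|F|+n$ candidate cross edges, so this is polynomial in $n$. In Case 2.1, the scan of $P_X$ stops after finding two consecutive vertices with distinct fault-free cross neighbours. By the counting argument in the proof of Theorem \ref{FT_HC_Johnson}, at most $O(|F|)$ positions can fail, so the scan inspects only $O(|F|)$ edges of $P_X$, each with $O(n)$ candidate neighbours. Case 2.2 is symmetric.

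Concatenation is $O(1)$ if paths are stored as linked lists. Complementation is the one exception: it must not re-map whole paths, since that would cost $\Theta(N)$ per level. We handle this by keeping vertices implicit, with a global flag recording whether labels are complemented, and converting only when outputting. Hence $W(n,k)=\mathrm{poly}(n)$, specifically $O(n^3)$. Base cases $k=1$ cost $O(n^2)$, and the small graphs cost $O(1)$.

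The second step is to sum the overhead over the recursion tree. Its leaves are instances with $k'=1$ (or $k'=n'-1$) or small constant instances, and its internal nodes correspond to the $J(n',k')$ subgraphs visited. A leaf with $k'=1$ covers $n'$ vertices at cost $O(n'^2)$. The number of internal nodes is at most the number of leaves, and the leaves partition $V(J(n,k))$.

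We then write the total cost as $\sum_{\text{nodes}} \mathrm{poly}(n')$ and compare it with $N$. The cleanest route is to charge each node's overhead to the vertices of its subgraph. A node $J(n',k')$ with $2\le k'\le n'-2$ has at least $\binom{n'}{2}$ vertices, so $O(n'^3)$ overhead is $O(n')$ per vertex, which is too much if charged at every level.

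Instead we use the hypothesis $4\le k\le n-4$. Only nodes with small $\binom{n'}{k'}$ relative to $n'^3$ are expensive per vertex, and near the leaves $k'=1$ the overhead per vertex is constant. A cleaner alternative is to prove by induction that $T(n,k)\le cN - d\,p(n)$ for a suitable polynomial $p$, since $N\ge\binom{n}{4}$ dominates $n^3$. The induction step would be $T(n,k)\le c\binom{n-1}{k-1}+c\binom{n-1}{k}-2d\,p(n-1)+W(n,k)$, and we need $2d\,p(n-1)-W(n,k)\ge d\,p(n)$. This holds for $p(n)=n^3$ and large $d$ only if $2(n-1)^3-n^3\ge W/d$, which is true for $n\ge 5$. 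The base cases $k'\in\{1,2,3\}$ and the near-complementary ones must still satisfy $T\le cN-dp$.

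That base-case check is the main obstacle. When $k'=2$ we have $N=\binom{n'}{2}$, which is comparable to $n'^3$ only up to a factor, so the subtractive term must be weakened there. We will choose $p$ depending on $k'$, or absorb these cases by noting that they are reached only through $k\ge4$ ancestors, whose slack pays for them. If this bookkeeping fails, we fall back to a direct count: the number of recursion-tree nodes at each $(n',k')$ is the number of lattice paths, giving $\sum_{n',k'}\binom{n-n'}{k-k'}\mathrm{poly}(n')$. We would bound this sum by $O(\binom{n}{k})$ using $k\ge4$ and $n-k\ge4$.
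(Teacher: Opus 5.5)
There is a genuine gap, and it is in how you bound the local work, not in the base-case bookkeeping you flag. You charge every recursion node its worst-case overhead $W(n',k')=O(n'\cdot k'(n'-k'))$ (or $O(n'^3)$), and every leaf $K_{n'}$ its worst case $O(n'^2)$, as though each subinstance had its own full fault budget. Summed over the recursion tree, these bounds are not $O(N)$.

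Take $k=4$ and $8\le n'\le n-2$. The tree contains exactly $\binom{n-n'}{2}$ copies of $J(n',2)$ (choose which two of the $n-n'$ steps go into $X$) and $\binom{n-n'-1}{2}$ leaves $K_{n'}$. Charging each of them $\Theta(n'^2)$, which is what your bounds allow, gives $\sum_{n'}\binom{n-n'}{2}n'^2=\Theta(n^5)$, whereas $N=\binom n4=\Theta(n^4)$.

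So the solution of your recurrence $T(n,k)\le T(n-1,k-1)+T(n-1,k)+W(n,k)$ is itself $\omega(N)$. This has three consequences:
\begin{itemize}
\item The failure of $T\le cN-d\,p(n)$ at the leaves (where $c\,n'-d\,n'^3<0$) cannot be repaired by a $k'$-dependent $p$ or by slack from $k\ge4$ ancestors.
\item Your fallback claim $\sum_{n',k'}\binom{n-n'}{k-k'}\mathrm{poly}(n')=O\bigl(\binom nk\bigr)$ is false already for the layer $k'=2$.
\item Under your own bound, a $K_{n'}$ leaf costs $O(n')$ per vertex, not $O(1)$ as you assert.
\end{itemize}

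The missing idea, which is the core of the paper's proof, is that $F$ is one global budget. The subgraphs at a fixed recursion depth are vertex-disjoint, so $\sum_R|F_R|\le|F|$ at every depth. Hence all fault-proportional work totals $O(n|F|)$ per depth. This covers the $O(n'|F_R|)$ counter updates in \texttt{PartitionSelection}, computing $F_X$ and $F_Y$, the bridge scans (where your bound of $O(|F_R|)$ failing positions applies), and the $O(m(|F_R|+1))$ cost of a $K_m$ leaf. Over the $O(n)$ depths this gives $O(n^2|F|)=O(n^4)$. The hypothesis $4\le k\le n-4$ is used only to get $n^4=O\bigl(\binom n4\bigr)=O(N)$; it does not make a per-node polynomial overhead summable.

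Only fault-independent work has to be counted over the tree: $O(1)$ per vertex for linked-list assembly, and $O(n')$ per node for initializing and scanning the counters, a term the paper's proof omits. This last term is also $O(N)$:
\begin{itemize}
\item A node with $k'\ge3$ spreads its $O(n')$ cost as $O(1/n'^2)$ over its at least $\binom{n'}{3}$ vertices. This is summable over a vertex's ancestors, which have distinct $n'$.
\item Each maximal chain $J(n_0',2)\supset J(n_0'-1,2)\supset\cdots$ costs $O(n_0'^2)=O\bigl(\binom{n_0'}{2}\bigr)$, and the chain tops are vertex-disjoint.
\end{itemize}

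To repair your argument, index the recurrence by the fault count: $T(n,k,f)\le T(n-1,k-1,f_X)+T(n-1,k,f_Y)+O(n(f+1))$ with $f_X+f_Y\le f$, and sum as above. Your idea of handling complementation implicitly rather than remapping paths is a sound refinement that the paper leaves unaddressed.
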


\begin{proof}
The total running time consists of path construction, partition
selection, bridge selection, and the base-case routing procedures.

The recursive construction outputs each vertex exactly once. With
linked path representations, concatenation of path segments requires
constant time. Hence the pure path-construction cost is $O(N)$.

Let $D$ be the maximum recursion depth. At each recursive step, the
first graph parameter decreases from $n$ to $n-1$, and therefore
\[
D=O(n).
\]

For a fixed recursion depth $d$, let $\mathcal{R}_d$ be the set of
recursive subproblems at that depth. The corresponding subgraphs are
vertex-disjoint, and hence their internal faulty-edge sets are
disjoint. Therefore,
\[
\sum_{R\in\mathcal{R}_d}|F_R|\le |F|.
\]

For each faulty edge $xy\in F_R$, Algorithm
\ref{alg:partition_selection} (updating the counters) needs $O(n)$ time. Thus the aggregate
partition-selection cost at one recursion depth is
\[
O\left(
n\sum_{R\in\mathcal{R}_d}|F_R|
\right)
=
O(n|F|).
\]
Summing over all $O(n)$ recursion levels gives
\[
T_{\mathrm{partition}}=O(n^2|F|).
\]
Since
\[
|F|\le k(n-k)-3\le \frac{n^2}{4},
\]
we obtain
\[
T_{\mathrm{partition}}=O(n^4).
\]
For $4\le k\le n-4$,
\[
N=\binom{n}{k}\ge\binom{n}{4}=\Theta(n^4),
\]
and hence
\[
T_{\mathrm{partition}}=O(N).
\]

The search for the required fault-free cross-layer edges is carried out along the recursively constructed paths, and its cost is included
in the overall path-processing overhead.

It remains to consider the base cases. For a complete-graph leaf $R\cong K_{m_R}$, let $F_R$ denote its internal faulty-edge set. A call to \texttt{CompleteGraphRouting} on $R$ requires
$O(m_R(|F_R|+1))$ time. Since these complete-graph leaves are vertex-disjoint,
\[
\sum_R m_R\le N,
\qquad
\sum_R |F_R|\le |F|.
\]
Moreover, $m_R\le n$ for every complete-graph leaf. Hence,
\[
T_{\mathrm{complete}}
=
O(N+n|F|)
=
O(N+n^3)
=
O(N).
\]

The fixed-size routines \texttt{SmallGraphRouting} and \texttt{SpecialJ63Routing} contribute only $O(N)$ aggregate time. Combining all components,
\[
T(N)=O(N).
\]
Since explicitly outputting a Hamiltonian path containing all $N$
vertices requires $\Omega(N)$ time, we conclude that
\[
T(N)=\Theta(N).
\]
Thus, Algorithm~\ref{alg:routing_general} is asymptotically optimal.
\end{proof}

The boundary cases $k=2$ and $3$ require separate consideration. For both cases, the admissible number of general edge faults is
linear in $n$. Since the recursion depth is $O(n)$, the above partition-selection analysis gives an aggregate overhead of $O(n^3)$. For $k=3$, we have
\[
N=\binom{n}{3}=\Theta(n^3),
\]
and hence the partition-selection overhead remains $O(N)$. For $k=2$, however,
\[
N=\binom{n}{2}=\Theta(n^2),
\]
so the same analysis yields the conservative bound
\[
O(n^3)=O(N^{3/2}).
\]
Thus, under the direct implementation of
Algorithm~\ref{alg:partition_selection}, the present analysis does not establish a linear-in-$N$ worst-case bound for $k=2$. A more specialized or incrementally maintained partition-selection scheme may further reduce this overhead.

\subsection{Routing under Matching and Vertex Faults}

\begin{algorithm}[!t]
\caption{Fault-Tolerant Hamiltonian Path for Matching/Vertex Faults}
\label{alg:routing_match_vertex}

\SetKwProg{Fn}{Function}{:}{}
\SetKwFunction{FMain}{RoutingMatchVertex}
\SetKwFunction{FBridge}{FindBridge}

\KwIn{Graph parameters $n, k$; Terminals $u, v$; Fault set $F$; $\mathrm{FaultType} \in \{\mathrm{Matching}, \mathrm{Vertex}\}$}
\KwOut{A fault-free Hamiltonian path $P$ joining $u$ and $v$ in $J(n,k) - F$}
\BlankLine

\Fn{\FMain{$n, k, u, v, F, \mathrm{FaultType}$}}{
    \If{$k>n/2$}{
    Transform $(u,v,F)$ into the complementary instance
    $(\bar{u},\bar{v},\bar{F})$ of $J(n,n-k)$\;
    
    $\bar{P} \leftarrow
    \FMain{$n,n-k,\bar{u},\bar{v},\bar{F},\mathrm{FaultType}$}$\;
    
    \Return the inverse-complement image of $\bar{P}$\;
    }

    \If{$k = 1$}{
        \Return CompleteGraphRouting($n, u, v, F, \mathrm{FaultType}$)\;
    }
    \If{$n = 5$}{
        \Return LookupBaseSmallGraphs($k, u, v, F, \mathrm{FaultType}$)\;
    }

    Select an arbitrary element $i \in u \setminus v$\;
    Partition into $X(i) \cong J(n-1, k-1)$ and $Y(i) \cong J(n-1, k)$ with $u \in X(i), v \in Y(i)$\;
    $F_X \leftarrow F \cap X(i)$, \quad $F_Y \leftarrow F \cap Y(i)$\;
    
    $(a, a') \leftarrow $ \FBridge{$X(i), Y(i), u, v, F_X, F_Y, \mathrm{FaultType}, F$}\;
    
    $P_1 \leftarrow $ \FMain{$n-1, k-1, u, a, F_X, \mathrm{FaultType}$}\;
    $P_2 \leftarrow $ \FMain{$n-1, k, a', v, F_Y, \mathrm{FaultType}$}\;
    
    \Return $P_1 \circ aa' \circ P_2$\;
}
\end{algorithm}

\begin{algorithm}[!t]
\caption{Bridge Selection for Matching and Vertex Faults}
\label{alg:find_bridge_mv}

\SetKwProg{Fn}{Function}{:}{}
\SetKwFunction{FBridge}{FindBridge}

\KwIn{Subgraphs $X, Y$; Terminals $u, v$; Fault sets $F_X, F_Y$; $\mathrm{FaultType}$; Global fault set $F$}
\KwOut{A valid fault-free cross-edge $(a, c)$ bridging $X$ and $Y$}
\BlankLine

\Fn{\FBridge{$X, Y, u, v, F_X, F_Y, \mathrm{FaultType}, F$}}{
    \uIf{$\mathrm{FaultType} == \mathrm{Matching}$}{
        Select an arbitrary candidate $a \in X \setminus \{u\}$\;
        \ForEach{$c \in N_Y(a) \setminus \{v\}$}{
            \If{$(a, c) \notin F$}{
                \Return $(a, c)$\;
            }
        }
    }
    \ElseIf{$\mathrm{FaultType} == \mathrm{Vertex}$}{
        \uIf{$1 \le |F_X| \le n-3$ \textbf{and} $1 \le |F_Y| \le n-3$}{
            Select a set $C \subseteq X \setminus (F_X \cup \{u\})$ with size $\min(2, |X \setminus (F_X \cup \{u\})|)$\;
            \ForEach{$a \in C$}{
                Select two distinct vertices $c_1,c_2\in N_Y(a)\setminus\{v\}$\;
                \lIf{$ac_1\notin F$}{
                \Return $(a,c_1)$}
                \Else{
                \Return $(a,c_2)$
                }
            }
        }
        \Else{
            \lIf{$F_Y \neq \emptyset$}{$f \leftarrow \text{any node in } F_Y$}
            \lElse{$f \leftarrow \text{any node in } Y \setminus \{v\}$}
            
            Select a fault-free neighbor $f' \in N_Y(f) \setminus F_Y$\;
            Select a fault-free neighbor $f'' \in N_X(f') \setminus \{u\}$\;
            \Return $(f'', f')$ \tcp*{Return a fault-free bridge}
        }
    }
}
\end{algorithm}

We now establish the complexity of the unified routing algorithm for matching and vertex faults. To achieve asymptotic optimality, we assume that the fault set $F$ is pre-processed into a hash-based data structure, enabling $\mathcal{O}(1)$ average-time complexity for fault-membership queries (i.e., verifying if a vertex or edge is faulty).

\begin{theorem} \label{thm:complexity_mv}
Algorithm \ref{alg:routing_match_vertex} constructs a fault-free Hamiltonian path in asymptotically optimal $\Theta(N)$ time.
\end{theorem}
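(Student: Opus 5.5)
The plan is to bound the total work over the recursion tree of Algorithm \ref{alg:routing_match_vertex}: first its structure, then the cost per call, then the leaves. Each non-base call on $J(n,k)$ splits its vertex set into the disjoint vertex sets of $X(i)$ and $Y(i)$. It makes exactly two recursive calls, on $X(i)$ and $Y(i)$. The complement normalization only relabels and does not branch. The recursion tree is therefore a binary tree whose leaves partition $V(J(n,k))$: each leaf is a complete graph $K_m$ ($k=1$) or a constant-size $J(5,k)$. Every internal node has exactly two children and every leaf contains at least one vertex, so the number of internal nodes is less than the number of leaves, which is at most $N$. Hence there are $O(N)$ calls in total.

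Next I would bound the non-recursive work in one internal call. Choosing $i\in u\setminus v$ takes $O(n)$ time. Testing membership of a vertex in $X(i)$ or $Y(i)$ takes $O(1)$ time once $i$ is fixed. The bridge selection in Algorithm \ref{alg:find_bridge_mv} inspects a constant number of candidates $a$ (at most two) and, for each, at most three neighbours in $Y$. Each fault query costs $O(1)$ by the hashing assumption, and listing the neighbours costs $O(n)$, so a call costs $O(n)$ in the worst case. This per-call cost is where the main obstacle lies. A naive bound would be $O(nN)$, not $O(N)$, and forming $F_X$ and $F_Y$ explicitly would add $O(|F|)$ per level.

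To handle the splitting of $F$, I would avoid copying it. Each recursive instance is represented by the fixed elements and a shared hash table of faults. Membership of a fault in $X(i)$ or $Y(i)$ is tested lazily, so partitioning costs nothing extra. The cardinality tests $|F_X|,|F_Y|$ in the vertex case are handled differently: each fault vertex is distributed once per level to the subproblem containing it. Since $|F|\le n-2$, or the faults form a matching, the cost across one level is $O(|F|)$. For the matching case I would rather charge this to the subproblems that actually contain faults. To charge the $O(n)$ per-call overhead, I would use the fact that an internal node on $J(m,k')$ has $\binom{m}{k'}\ge m$ vertices. Because subproblems at the same depth are vertex-disjoint, the overhead per level is $O(N)$. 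This still leaves a factor from the depth $n$. So I would sharpen the charge by noting that $i$ and the candidates $a,c$ can be produced in $O(1)$ amortized time. One way is to maintain the sets as bit vectors and take $a$ as the successor of $u$ in a precomputed Gray-type ordering. Another is to observe that the total cost $\sum_{\text{internal}} m$ satisfies $\sum m\le \sum \binom{m}{k'}$ weighted geometrically, since sizes shrink by a factor at least $\min\{k/n,(n-k)/n\}$. If this step fails, I would fall back on stating the bound under the model where a neighbour of a given vertex is generated in $O(1)$.

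For the leaves, a $K_m$ leaf is handled by the explicit constructions of Table \ref{HC_M_K_n} in the matching case, or by deleting the faulty vertices in the vertex case, in $O(m)$ time. The $J(5,k)$ leaves are constant-time lookups, giving $O(N)$ in total. Concatenating the segments via linked lists costs $O(1)$ per join, and there is one join per internal node, which is $O(N)$. Summing gives $O(N)$. The matching lower bound $\Omega(N)$ holds because the output lists $N-|F|$ vertices, and in the vertex model $N-|F|\ge N-(n-2)=\Theta(N)$. This gives $\Theta(N)$.
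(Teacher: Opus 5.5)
\textbf{There is a genuine gap.} Your skeleton matches the paper's. A binary recursion tree whose leaves partition $V(J(n,k))$ has fewer internal calls than leaves. The paper gets the same count by solving $T(n,k)\le T(n-1,k-1)+T(n-1,k)+O(1)$ via Pascal's identity. Your handling of the leaves, the concatenations and the $\Omega(N-|F|)$ lower bound is fine.

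The gap is the step you flag yourself: you never show that the non-recursive work sums to $O(N)$. You charge $O(n)$ per call and note that per-level charging still gives $O(nN)$. You then offer a Gray-type ordering, a ``geometric weighting'' and a fallback, but carry none of them out. None of them works as it stands:
\begin{itemize}
\item A sharper count of calls cannot help. For $k\approx n/2$, counting lattice paths shows that $\binom{n-4}{k-2}\approx N/16$ leaves are copies of $J(5,2)$ or $J(5,3)$. So there are $\Theta(N)$ internal calls, and an $O(n)$ bound per call only yields $O(nN)$.
\item The geometric idea fails as phrased, because a child need not be a constant fraction smaller than its parent. The $Y$-child of $J(m,2)$ keeps the fraction $(m-2)/m$ of the vertices.
\item Your fallback, as stated, only makes neighbour generation $O(1)$. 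You also charged $O(n)$ for choosing $i\in u\setminus v$, so the same obstruction remains.
\end{itemize}

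The paper closes this step by its cost model. Fault queries are $O(1)$ by hashing. Candidate cross-neighbours come from elementary set operations such as $a\setminus\{j\}\cup\{i\}$, and at most two or three are tested. Everything outside the two recursive calls, implicitly including the choice of $i$, is charged $O(1)$. Adopt that model for all per-call operations and your tree count completes the proof. The choice of $i$ can be made $O(1)$ with $n$-bit words, via $u\wedge\neg v$ and a lowest-set-bit instruction.

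If you want a weaker model, it is enough that a call on $J(m,k')$ costs $O(m)$ in its local parameter $m$, for example by representing labels relative to the free elements. Let $S(m,k')$ be the sum of $m$ over the internal nodes of the subtree. Induction gives
\[
S(m,k')\le \tfrac32\Bigl(\binom{m}{k'}-m\Bigr).
\]
The inductive step holds because $m+\tfrac32\bigl(\binom{m}{k'}-2(m-1)\bigr)\le\tfrac32\bigl(\binom{m}{k'}-m\bigr)$ exactly when $m\ge 6$. The leaves $K_m$ and $J(5,\cdot)$ satisfy the bound trivially. Hence the total overhead is at most $\tfrac32 N$, and your per-level charging was simply too lossy.

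Your point that $F_X$, $F_Y$ and the complemented fault set should not be materialized is a real one, and the paper's recurrence passes over it. In the matching model $|F|$ may be $N/2$, and explicit copying would cost $\Theta(|F|)$ per level.
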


\begin{proof}
The total time complexity of the algorithm is determined by the recursive combinatorial path generation and the local overhead associated with the \texttt{FindBridge} subroutine at each recursion step. 

Let $T(n,k)$ denote the time required to construct the Hamiltonian path in $J(n,k)$. Thus, the recurrence relation can be formulated as:
\[
    T(n,k) = T(n-1, k-1) + T(n-1, k) + T_{\mathrm{bridge}}(n,k) + \mathcal{O}(1),
\]
where $T_{\mathrm{bridge}}(n,k)$ is the time complexity of selecting a valid fault-free cross-edge.

Instead of exhaustively scanning the $\mathcal{O}(n)$ cross-neighbors for a selected candidate $a \in X(i)$, we exploit the structural constraints of the fault models to achieve $\mathcal{O}(1)$ bridge selection:
\begin{itemize}
    \item \textbf{Under matching faults:} Since the fault set is a matching, at most one faulty edge is incident to any vertex $a$. Therefore, the algorithm only generates at most two candidate cross-neighbors via elementary set operations (e.g., $a \setminus \{j\} \cup \{i\}$) and test them using $\mathcal{O}(1)$ queries. 
    \item \textbf{Under vertex faults:} Whether selecting an arbitrary pair $a, b$ (when faults are distributed) or arbitrarily picking an fault node $f$ in $F_X$ (when $F_X=\emptyset$), the initial selections require no heuristic search and execute in strict $\mathcal{O}(1)$ time. To locate the required intermediate fault-free neighbor, the algorithm utilizes $\mathcal{O}(1)$ hash queries in total.      
\end{itemize}

Consequently, $T_{\mathrm{bridge}}(n,k) = \mathcal{O}(1)$. The recurrence relation thus simplifies to:
\[
    T(n,k) \le T(n-1, k-1) + T(n-1, k) + \mathcal{O}(1).
\]
Notice that the size of the graph precisely follows Pascal's identity:
\[
    N(n,k) = \binom{n}{k} = \binom{n-1}{k-1} + \binom{n-1}{k}.
\]
By induction, solving this recurrence yields $T(n,k) = \mathcal{O}(N(n,k)) = \mathcal{O}(N)$. 

Furthermore, any algorithm that constructs and outputs a Hamiltonian path must sequentially visit and output all $N$ vertices, which requires at least $\Omega(N)$ time. Combining the upper and lower bounds, we conclude that the total time complexity is $T(n,k) = \Theta(N)$, proving that the algorithm is asymptotically optimal.
\end{proof}

\subsection{Simulation and Performance Evaluation}

To evaluate the computational efficiency of the proposed routing algorithms, we measure the Average Execution Time (AET) required to construct a fault-free Hamiltonian path. The experiments are conducted on $J(16,k)$ for $2\leq k\leq 8$. For each network size and fault model, one fault configuration at the maximum cardinality guaranteed by the corresponding theoretical result is generated and fixed across 100 randomly selected source-destination pairs. The timing interval includes only the execution of the routing algorithm and excludes graph construction, fault generation, terminal-pair sampling, and post-routing correctness verification. All experiments were implemented in Python 3.7 and executed on a machine equipped with an Intel(R) Core(TM) i7-8550U processor and 8 GB of RAM running Windows 10. All reported routing times were measured using a single thread.

Fig. \ref{AET_FT} reports the AET of the three fault-tolerant Hamiltonian routing algorithms for $J(16,k)$, $2\leq k\leq 8$, corresponding to network sizes ranging from $N=120$ to $N=12{,}870$. The fault configurations contain $k(n-k)-3$ general faulty edges,
a faulty-edge set forming a perfect matching, and $n-2$ faulty vertices for the general-edge, matching, and vertex fault models, respectively. For every tested network size and fault model, all 100 routing trials successfully produced valid fault-free Hamiltonian paths.

As shown in Fig. \ref{AET_FT}, the execution times of all three algorithms increase smoothly with the network size and exhibit an approximately linear growth trend over the tested range. As $N$ increases from 120 to $12{,}870$, the AET increases from $2.76\times10^{-3}$ to $3.00\times10^{-1}$~s under general edge faults, from $2.49\times10^{-3}$ to $3.18\times10^{-1}$~s under matching faults, and from $1.83\times10^{-3}$ to $2.74\times10^{-1}$~s under vertex faults. Linear regressions of AET against $N$ yield coefficients of determination of $R^2=0.9990$, $0.9978$, and $0.9982$ for the general-edge, matching, and vertex fault models, respectively. These high goodness-of-fit values indicate that the measured runtime
is well approximated by a linear function of $N$ over the tested range, supporting the theoretical scalability analysis developed in the preceding subsection.

The three curves remain relatively close to one another across all tested network scales, indicating comparable practical efficiency under the different fault models. The vertex-fault algorithm consistently exhibits the lowest AET, reaching approximately $0.274$~s for $J(16,8)$, whereas the general-edge and matching algorithms require approximately $0.300$ and $0.318$~s, respectively. Despite these modest implementation-level differences, no qualitatively different growth behavior is observed among the three fault models. It is also noteworthy that the matching-fault experiment for $J(16,8)$ contains a perfect matching of $6{,}435$ faulty edges, compared with only $61$ arbitrary faulty edges under the general-edge model. Thus, although the matching-fault model permits a substantially larger number of faulty edges because of its structural restriction, its routing time remains comparable to that under general edge faults.

Since the algorithms were implemented in Python, the absolute execution times
are implementation dependent; nevertheless, the observed growth trend agrees well with the theoretical scalability analysis.

\begin{figure}[ht]
	\centering
	\includegraphics[width=0.8\textwidth]{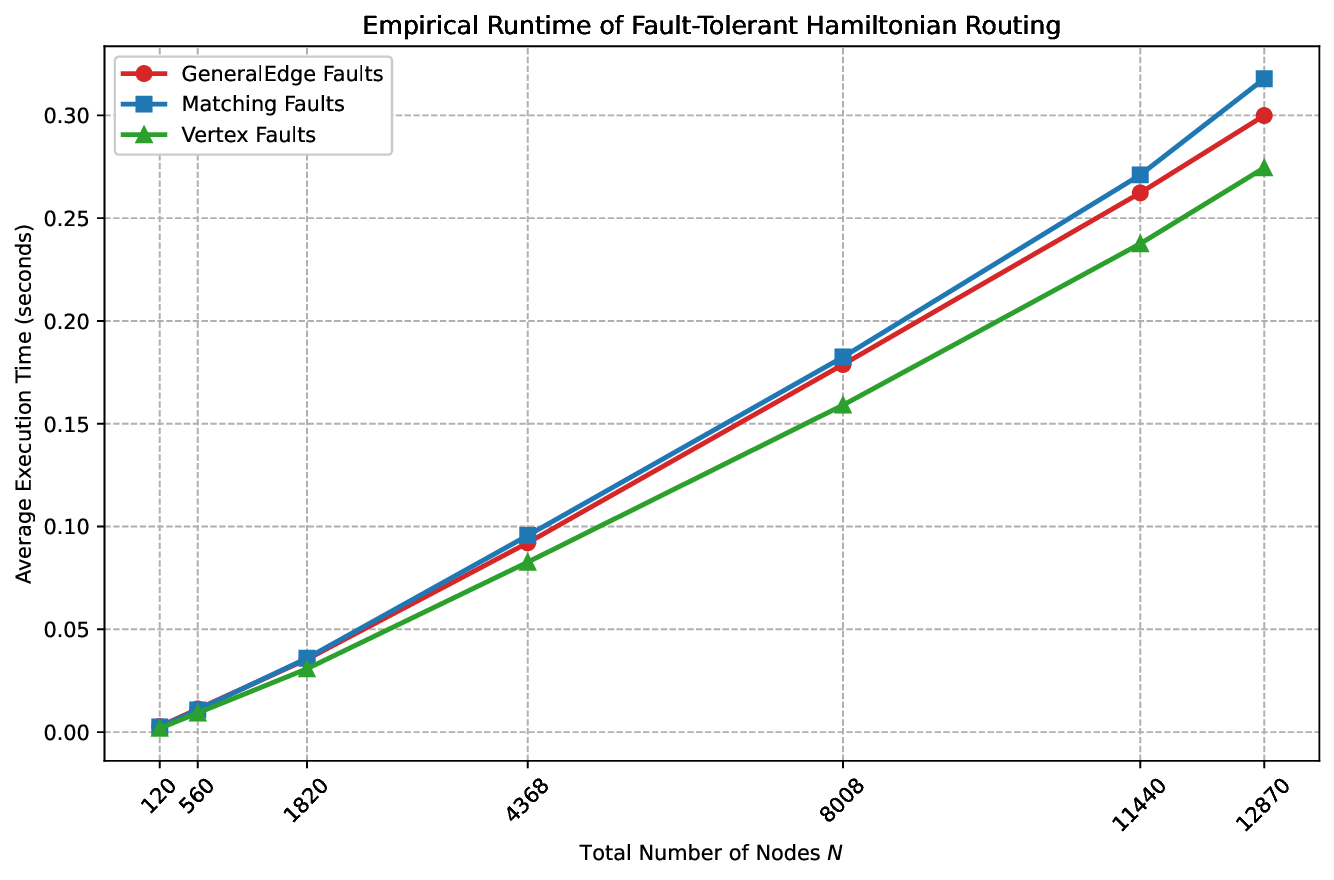}
     \caption{Average execution time (AET) versus network size $N$ for the proposed Hamiltonian routing algorithms under general edge, matching, and vertex faults in $J(16,k)$, $2\leq k\leq8$. Each data point is averaged over 100 randomly selected source-destination pairs at the maximum fault cardinality guaranteed by the corresponding fault-tolerance result.}
     \label{AET_FT}
\end{figure}

\section{Conclusion}

In this paper, we have investigated the fault-tolerant Hamiltonian connectivity of the Johnson graph $J(n,k)$, a highly symmetric graph with potential as a topology for multiprocessor systems. We first established that $J(n,k)$ is $(k(n-k)-3)$-edge-fault-tolerant Hamiltonian-connected, which attains the natural degree-based upper bound. Second, to reflect practical engineering scenarios where faulty links do not cluster, we proved that $J(n,k)$ remains Hamiltonian-connected under any matching fault for $n \ge 5$. Finally, addressing the destructive nature of node failures, we demonstrated that $J(n,k)$ is $(n-2)$-vertex-fault-tolerant Hamiltonian-connected.

When comparing the Johnson network $J(n,k)$ with bipartite multiprocessor topologies such as the hypercube $Q_n$ and the star graph $S_n$, a fundamental topological distinction emerges. For bipartite interconnection networks, Hamiltonian laceability serves as the natural counterpart of Hamiltonian connectivity, since the bipartition prevents Hamiltonian paths from joining arbitrary pairs of vertices. Consequently, a spanning linear array can generally be embedded only between endpoints in opposite partite sets. This parity restriction limits endpoint flexibility when arbitrary processors are required to serve as the two ends of a spanning linear structure. In contrast, $J(n,k)$ guarantees a Hamiltonian path between any two fault-free vertices, even under the fault conditions considered in this paper. This unrestricted endpoint flexibility, together with its strong fault-tolerance highlights the potential of Johnson graphs as robust interconnection topologies.

While this work establishes the tight bound for general edge faults and matching faults, the optimal upper bound for the vertex-fault tolerance of $J(n,k)$ is yet to be determined. As demonstrated by our exhaustive computer search on $J(6,3)$, the degree-minus-three edge-fault bound does not extend to vertex faults. Therefore, determining the exact, tight upper bound for the vertex-fault-tolerant Hamiltonian connectivity of general Johnson graphs remains a challenging problem. Additionally, investigating conditional edge-fault models under the minimum-degree constraint $\delta(J(n,k)-F)\ge 3$ is another natural direction for further characterizing the fault-tolerant Hamiltonian connectivity of Johnson graphs.

\vskip 0.3 in

\noindent{\bf\normalsize Data Availability} Data sharing not applicable to this article as no datasets were generated or analysed during the current study.

\vskip 0.3 in

\noindent{\bf\large Declarations}

\vskip 0.3 in

\noindent{\bf\normalsize Competing Interests} The authors have not disclosed any competing interests.

\vskip 0.3 in
\bibliographystyle{IEEEtran}
\bibliography{ref}

\end{document}